\documentclass[12pt,transaction, draftclsnofoot,onecolumn]{IEEEtran}
\usepackage{amsfonts}
\usepackage{amssymb}
\usepackage{hyperref}
\usepackage{graphicx}
\usepackage{subfigure}
\usepackage{enumerate}
\usepackage{amsmath}
\usepackage{color}
\usepackage{amsthm}
\usepackage{amsmath}
\usepackage{algorithm}
\usepackage{algpseudocode}
\usepackage{color}
\usepackage{accents}
\usepackage{lipsum}
\usepackage{float}
\usepackage{breakurl}
\usepackage[font={small}]{caption}
\algtext*{EndWhile}
\algtext*{EndIf}
\algtext*{EndFor}

\newcommand\ubar[1]{%
	\underaccent{\bar}{#1}}
\hyphenation{since}

\usepackage[hang,flushmargin]{footmisc}
\usepackage{caption}

\DeclareMathOperator*{\argmin}{arg\,min}

\newcommand{\bp}{\begin{proof} \small }
\newcommand{\ep}{\end{proof} \normalsize}
\newcommand{\epx}{\end{proof} \small}
\newcommand{\bpa}{\begin{proofappx} \footnotesize }
\newcommand{\epa}{\end{proofappx} \small }
\newtheorem{theorem}{Theorem}

\newtheorem{lemma}{Lemma}
\newtheorem{assumption}{Assumption}
\newtheorem{definition}{Definition}

\newtheorem*{theorem*}{Theorem}
\newtheorem*{proposition*}{Proposition}
\newtheorem*{corollary*}{Corollary}
\newtheorem*{lemma*}{Lemma}
\newtheorem*{assumption*}{Assumption}
\newtheorem*{definition*}{Definition}
\newtheorem*{claim*}{Claim}

\newcommand{\be}{\begin{equation}}
\newcommand{\ee}{\end{equation}}
\newcommand{\bs}{\begin{subequations}}
\newcommand{\es}{\end{subequations}}
\newcommand{\bq}{\begin{eqnarray}}
\newcommand{\eq}{\end{eqnarray}}
\newcommand{\bqn}{\begin{eqnarray*}}
\newcommand{\eqn}{\end{eqnarray*}}

\newcommand{\ba}{\left[ \begin{array}}
\newcommand{\ea}{\\ \end{array} \right]}
\newcommand{\ben}{\begin{enumerate}}
\newcommand{\een}{\end{enumerate}}

\def\P{{\boldsymbol{P}}}

\def\X{{\boldsymbol{X}}}

\def\d{{\boldsymbol{d}}}

\def\p{{\boldsymbol{p}}}

\def\x{{\boldsymbol{x}}}

\def\real{{\mathchoice%
{\hbox{\rm\setbox1=\hbox{I}\copy1\kern-.45\wd1 R}}
{\hbox{\rm\setbox1=\hbox{I}\copy1\kern-.45\wd1 R}}
{\hbox{\scriptsize\rm\setbox1=\hbox{I}\copy1\kern-.45\wd1 R}}
{\hbox{\scriptsize\rm\setbox1=\hbox{I}\copy1\kern-.45\wd1 R}}}}

\def\Zint{{\mathchoice{\setbox1=\hbox{\sf Z}\copy1\kern-.75\wd1\box1}
{\setbox1=\hbox{\sf Z}\copy1\kern-.75\wd1\box1}
{\setbox1=\hbox{\scriptsize\sf Z}\copy1\kern-.75\wd1\box1}
{\setbox1=\hbox{\scriptsize\sf Z}\copy1\kern-.75\wd1\box1}}}
\newcommand{\complex}{ \hbox{\rm C\kern-0.45em\rule[.07em]{.02em}{.58em}%
\kern 0.43em}}

\DeclareMathOperator*{\argmax}{arg\,max} 

\ifodd 1

\else

\fi

\begin{document}
%
\title{Spatio-temporal Edge Service Placement: A Bandit Learning Approach}
%
\author{Lixing~Chen,~\IEEEmembership{Student~Member,~IEEE},
        ~Jie~Xu,~\IEEEmembership{Member,~IEEE},
        ~Shaolei~Ren,~\IEEEmembership{Member,~IEEE},
        ~Pan~Zhou,~\IEEEmembership{Member,~IEEE}
\thanks{L. Chen and J. Xu are with Department of Electrical and
	Computer Engineering, University of Miami, USA. }
\thanks{S. Ren is with Department of Electrical and Computer Engineering, University of California, Riverside, USA. }
\thanks{P. Zhou is with School of EIC, Huazhong University of Science and Technology, China. }
}
\maketitle
\vspace{-0.3 in}
\begin{abstract}
Shared edge computing platforms deployed at the radio access network are expected to significantly improve quality of service delivered by Application Service Providers (ASPs) in a flexible and economic way. However, placing edge service in every possible edge site by an ASP is practically infeasible due to the ASP's prohibitive budget requirement. In this paper, we investigate the edge service placement problem of an ASP under a limited budget, where the ASP dynamically rents computing/storage resources in edge sites to host its applications in close proximity to end users. Since the benefit of placing edge service in a specific site is usually unknown to the ASP a priori, optimal placement decisions must be made while learning this benefit. We pose this problem as a novel combinatorial contextual bandit learning problem. It is ``combinatorial'' because only a limited number of edge sites can be rented to provide the edge service given the ASP's budget. It is ``contextual'' because we utilize user context information to enable finer-grained learning and decision making. To solve this problem and optimize the edge computing performance, we propose SEEN, a Spatial-temporal Edge sErvice placemeNt algorithm. Furthermore, SEEN is extended to scenarios with overlapping service coverage by incorporating a disjunctively constrained knapsack problem. In both cases, we prove that our algorithm achieves a sublinear regret bound when it is compared to an oracle algorithm that knows the exact benefit information. Simulations are carried out on a real-world dataset, whose results show that SEEN significantly outperforms benchmark solutions.
\end{abstract}


%
\IEEEpeerreviewmaketitle

\section{Introduction}
Mobile cloud computing (MCC) supports mobile applications in resource-constrained mobile devices by offloading computation-demanding tasks to the resource-rich remote cloud. Intelligent personal assistant applications are perhaps the most popular applications that rely on MCC, where the speech recognition engine that uses advanced machine learning technologies to function resides in the cloud server. Nowadays, mobile applications such as virtual/augmented reality and mobile gaming are becoming even more data-hungry, latency-sensitive and location-aware. For example, Google Lens is a real-time image recognition application that can pull up the right information (e.g. restaurant reviews and menus) in an interactive and responsive way as the user points his/her smartphone camera to objects (e.g. a restaurant) when passing by. However, as these applications become more prevalent, ensuring high quality of service (QoS) becomes very challenging due to the backbone network congestion, delay and expensive bandwidth \cite{li2010cloudcmp}.

To address these challenges, mobile edge computing (MEC) \cite{taleb20175G} has recently been proposed. The key idea of MEC is to move the computation resources to the logical edge of the Internet, thereby enabling analytics and knowledge generation to occur closer to the data sources. Such an edge service provisioning scenario is no longer a mere vision, but becoming a reality. Vapor IO \cite{VaporIO} has launched Project Volutus \cite{Volutus} to deliver shared edge computing services via a network of micro data centers deployed in cellular tower sites. In a recent white paper \cite{Intel}, Intel envisions that its smart cell platform will allow mobile operators to sell IT real-estate at the radio access network and to monetize their most valuable assets without compromising any of the network features. It is anticipated that Application Service Providers (ASPs) will soon be able to rent computation resources in such \textit{shared} edge computing platforms in a flexible and economic way. Fig. \ref{fig:googlelens} illustrates how the Google Lens application can leverage the shared edge computing platform to improve QoS.
\begin{figure} [b]
	\vspace{-0.3 in}
	\centering
	\includegraphics[width=0.7\linewidth]{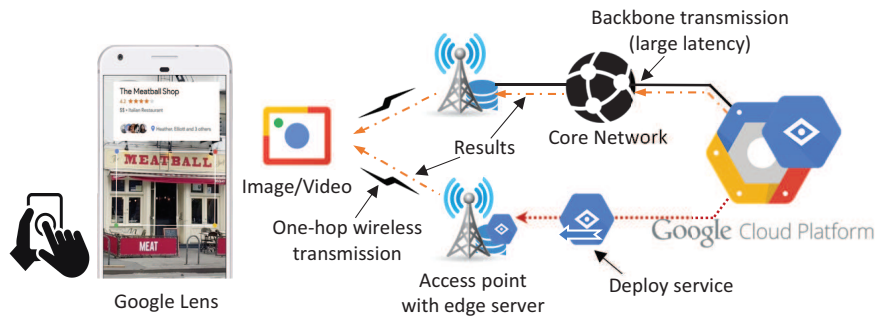}
	\caption{Mobile Edge Computing for Google Lens. In MCC (upper side), the image/video is sent to cloud server for processing via backbone Internet and incur large transmission latency; In MEC (lower side), the service is deployed at edge server collocated at the access point, and hence the image/video only needs to be sent to edge server for processing via one-hope wireless connection, which enables quick response.}
	\label{fig:googlelens}
	\vspace{-0.2 in}
\end{figure}

While provisioning edge service in every possible edge site (i.e. base station) can deliver the best QoS, it is practically infeasible, especially for small and starting ASPs, due to the prohibitive budget requirement. In common business practice, an ASP has a budget on the operation expenses in mind and desires the best performance within the budget \cite{stansberry2013uptime}. This means that an ASP will only be able to place edge services in a limited number of edge sites and hence, where to rent edge computation resources must be judiciously decided to improve QoS under the limited budget. 

Deciding the optimal edge service placement policy faces significant challenges due to information uncertainty in both spatial and temporal domains. Firstly, the benefit of edge service provisioning primarily depends on the service demand of users, which can vary considerably across different sites. However, the demand pattern is usually unknown to the ASP before deploying edge service in a particular site, and may also vary substantially after frequent application updates. Because the demand pattern can only be observed at sites where the edge service is deployed, how to make the optimal tradeoff between exploration (i.e. to place edge service in unknown sites to learn the user demand pattern) and exploitation (i.e. to place edge service at high-demanding sites to maximize ASP utility) is a key challenge. Secondly, even in the same site, the service demand varies over time depending on who are currently in the mobile cell, what their preferences are, what mobile devices they use, time and other environmental variables. Collectively, this information is called the context information. Incorporating this valuable information into the edge service placement decision making, in addition to the plain number of devices, is likely to improve the overall system performance but is challenging because the context space can be huge. Learning for each specific context is nearly impossible due to the limited number of context occurrences. A promising approach is to group similar contexts so that learning can be carried out on the context-group level. However, how to group contexts in a way that enables both fast and effective learning demands for a careful design.

In this paper, we study the spatial-temporal edge service placement problem of an ASP under a limited budget and propose an efficient learning algorithm, called SEEN (Spatial-temporal Edge sErvice placemeNt), to optimize the edge computing performance. SEEN does not assume a priori knowledge about users' service demand. Rather, it learns the demand pattern in an online fashion by observing the realized demand in sites where edge service is provisioned and uses this information to make future edge service placement decisions. In particular, SEEN is location-aware as it uses information only in the local area for each base station and is context-aware as it utilizes user context information to make edge service placement decisions.

The spatial-temporal edge service placement problem is posed as a novel Contextual Combinatorial Multi-armed Bandit (CC-MAB) problem \cite{lai1985asymptotically} (see more detailed literature review in Section II). We analytically bound the loss due to learning, termed regret, of SEEN compared to an oracle benchmark that knows precisely the user demand pattern a priori. A sublinear regret bound is proved, which not only implies that SEEN produces an asymptotically optimal edge service placement policy but also provides finite-time performance guarantee. The proposed algorithm is further extended to scenarios with overlapping service coverage. In this case, a disjunctively constrained knapsack problem is incorporated into the framework of SEEN to deal with the service demand coupling caused by the coverage overlapping among cells. We prove that the sublinear regret bound still holds. To evaluate the performance of SEEN, we carry out extensive simulations on a real-world dataset on mobile application user demand \cite{lim2015investigating}, whose results show that SEEN significantly outperforms benchmark algorithms.

The rest of this paper is organized as follows. Section \ref{sec:related_work} reviews related works. Section \ref{sec:system model} presents the system model and formulates the problem. Section \ref{sec:SEEN} designs SEEN and analyzes its performance. Section \ref{sec:overlapping} extends SEEN to the overlapping coverage scenario. Section \ref{sec:simulation} presents the simulation results, followed by the conclusion in Section \ref{sec:conclusion}.

\section{Related Work}\label{sec:related_work}
Mobile edge computing has attracted much attention in recent years \cite{mao2017mobile,shi2016edge}. Many prior studies focus on computation offloading, concerning what/when/how to offload users' workload from their devices to the edge servers or the cloud. Various works have studied different aspects of this problem, considering e.g. stochastic task arrivals \cite{huang2012dynamic,liu2016delay}, energy efficiency \cite{xu2017online,mao2016dynamic}, collaborative offloading \cite{chen2017socially,tanzil2016distributed}, etc. However, these works focus on the optimization problem after certain edge services have been provisioned at the Internet edge. By contrast, this paper focuses on how to place edge service among many possible sites in an edge system.

Service placement in edge computing has been studied in many contexts in the past. Considering content delivery as a service, many prior works study placing content replicas in traditional content delivery networks (CDNs) \cite{chen2002dynamic} and, more recently, in wireless caching systems such as small cell networks \cite{shanmugam2013femtocaching}. Early works addressed the centralized cases where the demand profile is static or time invariant, and dynamic service placement in geographically distributed clouds is studied in \cite{zhang2013dynamic} in the presence of demand and resource dynamics. Our prior works \cite{xu2018joint,chen2017collaborative} investigate collaborative service placement to improve the efficiency of edge resource utilization by enabling cooperation among edge servers. However, these works assume that the service demand is known a priori whereas the service demand pattern in our problem has to be learned over time. A learning-based content caching algorithm for a wireless caching node was recently developed in \cite{muller2017context}, which also takes a contextual bandit learning approach similar to ours. However, it considers the caching policies (i.e. which content to cache) in a single caching site whereas we aim to determine where to place edge service among multiple edge sites, which may have to maintain distinct context spaces. Importantly, we also consider the coupled decisions among multiple sites due to the possible overlapping coverage while content files in \cite{muller2017context} are treated independently.

MAB algorithms have been widely studied to address the critical tradeoff between exploration and exploitation in sequential decision making under uncertainty \cite{lai1985asymptotically}. The basic MAB setting concerns with learning the single optimal action among a set of candidate actions of a priori unknown rewards by sequentially trying one action each time and observing its realized noisy reward \cite{auer2002finite,agrawal1995sample}. Combinatorial bandits extends the basic MAB by allowing multiple-play each time (i.e. choosing multiple edge sites under a budget in our problem) \cite{anantharam1987asymptotically,agrawal1990multi,gai2012combinatorial} and contextual bandits extends the basic MAB by considering the context-dependent reward functions \cite{slivkins2011contextual,li2010contextual,tekin2015distributed}. While both combinatorial bandits and contextual bandits problems are already much more difficult than the basic MAB problem, this paper tackles the even more difficult CC-MAB problem. Recently, a few other works \cite{qin2014contextual,li2016contextual} also started to study CC-MAB problems. However, these works make strong assumptions that are not suitable for our problem. For instance, \cite{qin2014contextual,li2016contextual} assume that the reward of an individual action is a linear function of the contexts. \cite{muller2017context} is probably the most related work that investigates contextual and combinatorial MAB for proactive caching. However, our work has many key differences from \cite{muller2017context}. First, \cite{muller2017context} considers CC-MAB for a single learner (a caching station) and maintains a common context space for all users. By contrast, our paper considers a multi-learner case, where each learner (i.e. SBS) learns the demand pattern of users within its service range. More importantly, we allow each SBS to maintain a distinct location-specific context space and collect different context information of connected users according to the user’s preference. Second, while \cite{muller2017context} considers a bandit learning problem for a fixed size of content items, we allow our algorithm to deal with infinitely large user set. Third, we further consider an overlapped edge network and address the decision coupling among edge sites due to overlapped coverage.

\section{System Model}\label{sec:system model}
\subsection{Edge System and Edge Service Provisioning}
We consider a heterogeneous network consisting of $N$ small cells (SCs), indexed by $\mathcal{N}$, and a macro base station (MBS). Each SC has a small-cell base station (SBS) equipped with a shared edge computing platform and thus is an edge site that can be used to host edge services for ASPs. The MBS provides ubiquitous radio coverage and access to the cloud server in case edge computing is not accessible. SBSs (edge sites) provide Software-as-a-Service (SaaS) to ASPs, managing computation/storage resources (e.g. CPU, scheduling, etc.) to ensure end-to-end QoS, while the ASP maintains its own user data, serving as a middleman between end users and SaaS Providers. As such, SBSs charge the ASP for the amount of time the edge service is rented. Fig. \ref{fig:system_model} gives an illustration for the considered scenario. 
\begin{figure}[htb]
	\centering
	\includegraphics[width=0.7\linewidth]{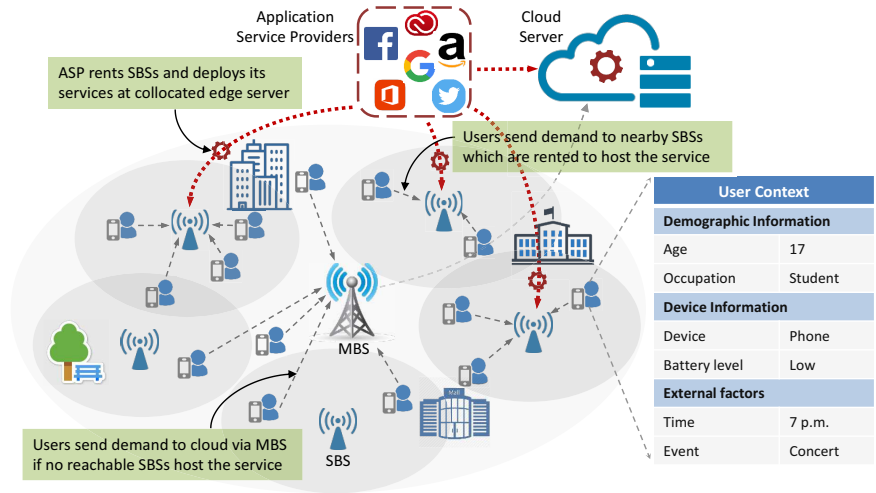}
	\caption{Illustration of context-aware edge service provisioning}
	\label{fig:system_model}
	\vspace{-0.2 in}
\end{figure}

Specifically, computation and storage resource allocation in SBSs can be realized by \emph{containerization} techniques \cite{pahl2015containerization}, e.g., Dockers and Kubernetes \cite{bernstein2014containers}. The key advantage of containerization over the virtual machine technology is that it incurs much lower system overhead and much shorter launch time. For example, each SBS can set up a \emph{Docker Registry} to store \emph{Dock images} (i.e. a package that encapsulates the running environment of an application) locally. When the SBS is chosen to host the ASP's application, it will pull up the Docker image for the corresponding application and configure the container in seconds \cite{russell2014kvm}.  Without loss of generality, this paper focuses on the service placement for one application. Due to the limited budget, the ASP can only rent up to $b (b < N)$ SBSs, where we assume for simplicity that all SBSs charge the same price for a unit time.
\begin{table}[htb]
		\centering
		\caption{Nomenclature}\label{tab:runtime}
		\begin{tabular}{l|l} 
			\hline
			\textbf{Variable} & \textbf{Description}  \\
			\hline
			$\mathcal{N}$      & a set of total $N$ SBSs \\
			$b$  & the budget of ASP \\
			$\mathcal{S}^t$ & the set of SBSs selected in slot $t$ \\
			$\mathcal{S}^{*t}$ &  Oracle solution in slot $t$ \\
			$\mathcal{M}^t$    & user population in slot $t$  \\
			$\mathcal{M}^t_n$    & users covered by SBS $n$   \\
			$d^t_m$    &  service demand of user $m$ in slot $t$   \\
			$\lambda$    &  input data size of one task  \\
			$\eta$    &  required CPU cycles for one task  \\
			$Q^t_{n,m}$    &  the delay of completing one task for user $m$ at SBS $n$  \\
			$\tilde{u}^t_{n,m}$    &  the delay reduction of one task\\
			$\mathcal{X}_n$    &  context space maintained by SBS $n$\\
			$D_n$    &  dimension of context space monitored by SBS $n$\\
			$\mathcal{P}_{n,T}$    & partition created on context space $\mathcal{X}_n$\\
			$x_{n,m}$    &  user $m$'s context observed by SBS $n$, $x_{n,m} \in \mathcal{X}_n$\\
			$\X^t$    &  contexts of all users in slot $t$, $\X^t= (x^t_{n,m})_{m\in\mathcal{M}_n, n\in\mathcal{N}}$ \\
			$\mu(x)$    &  expected service demand for a user with context $x$\\
			$\hat{d}(p)$    &  demand estimation for users with context in hypercube $p$\\
			\hline
		\end{tabular}\\
		\vspace{-0.1 in}
\end{table}

The operational timeline is discretized into time slots. In each time slot $t$, ASP chooses a set of SBSs $\mathcal{S}^t\in\mathcal{N}$, where $|\mathcal{S}^t|\leq b$, for application deployment. This decision is referred to as the (edge) service placement decision in the rest of this paper. Let $\mathcal{M}^t={1,\dots,M^t}$ be the user population served by the entire network in time slot $t$ and let $\mathcal{M}^t_n\subseteq\mathcal{M}^t$ be the user population covered by SBS $n$. The user population in the considered network can vary across time slots because of the user mobility. Users can also move within a time slot but we assume that the User-SBS association remains the same within a time slot for simplicity. We consider that the service placement decisions are made on the scale of minutes so that frequent reconfiguration of edge services is avoided while the temporal variation of user population is largely captured.

We will first consider the case where the service areas of SBSs are non-overlapping and then consider the case with overlapping service areas in Section \ref{sec:overlapping}. In the non-overlapping case, if SBS $n$ is chosen by the ASP to host the application in time slot $t$, i.e. $n\in\mathcal{S}^t$, then user $m\in\mathcal{M}^t_n$ in its coverage can offload data to SBS $n$ for edge computing. Otherwise, users in SBS $n$'s coverage have to offload data to the cloud (via the MBS) for cloud computing.

\subsection{ASP Utility Model}
The ASP derives utility by deploying edge computing services. On the one hand, the ASP has a larger utility if the edge computing service is deployed in areas where the service demand is larger as more users can enjoy a higher QoS. Let $d^t_m$ (in terms of the number of tasks) be the service demand of user $m$ in time slot $t$, which is unknown a priori at the edge service placement decision time, and the service demand of all users in the network is collected in $\d^t=(d^t_m)_{m\in\mathcal{M}^t}$. On the other hand, the ASP derives a larger utility if edge computing service is deployed in areas where edge computing performs much better than cloud computing. In this paper, we use delay as a performance metric of edge/cloud computing. Since we focus on a single application, we assume that tasks have same input data size $\lambda$ (in bits) and required CPU cycles $\eta$.

\subsubsection{Delay by Edge Computing}
If the task of user $m \in \mathcal{M}^t_n$ is processed by SBS $n$ at the edge, then the delay consists of the wireless transmission delay and edge computation delay. The achievable wireless uplink transmission rate between user $m$ and SBS $n$ can be calculated according to the Shannon capacity: $r^t_{n,m} = W\log_2\left(1 + P^{u}_m H^t_{n,m}/(N_0+I)\right)$, where $W$ is the channel bandwidth, $P^{u}_m$ is the transmission power of user $m$' device, $H^t_{n,m}$ is the uplink channel gain between user $m$ and SBS $n$ in time slot $t$, $N_0$ is the noise power and $I$ is the interference. Therefore, the transmission delay $Q^{\text{tx},t}_{n,m}$ of user $m$ for sending a task (i.e. $\lambda$ bits of input data) to SBS $n$ is $Q^{\text{tx},t}_{n,m} = \lambda/r^t_{n,m}$. We assume that the data size of the task result is small. Hence the downlink transmission delay is neglected. The computation delay depends on the computation workload and the edge server's CPU frequency. To simplify our analysis, we assume that the edge server of SBS $n$ processes tasks at its maximum CPU speed $f_n$. Therefore, the computation delay for one task is $\eta/f_n$. Overall, the delay of processing user $m$'s one task at the edge site of SBS $n$ is $Q^t_{n,m} = \lambda/r^t_{n,m} + \eta/f_n$.

\subsubsection{Delay by Cloud Computing}
If the task of user $m \in \mathcal{M}^t_n$ is processed in the cloud, then the delay consists of the wireless transmission delay, the backbone Internet transmission delay and the cloud computation delay. The wireless transmission delay can be computed similarly as in the edge computing case by first calculating the transmission rate $r^t_{0,m}$ between user $m$ and the MBS. The cloud computing delay can also be calculated similarly using the cloud server's CPU frequency $f_0$. However, compared to edge computing, cloud computing incurs an additional transmission delay since the data has to travel across the backbone Internet. Let $v^t$ be the backbone transmission rate and $h^t$ be the round trip time in time slot $t$, then an additional transmission delay $\lambda/v^t + h^t$ is incurred. Overall, the delay of processing user $m$'s one task in the cloud is $Q^{t}_{0,m} = \lambda/r^t_{0,m} + \eta/f_0 + (\lambda/v^t + h^t)$.

Taking into account the service demand and the possible delay reduction provided by edge computing. The utility of ASP when taking service placement decision $\mathcal{S}^t$ in time slot $t$ is:
\begin{align}\label{eq:utility}
U^t(\d^t, \mathcal{S}^t) = \sum_{n\in\mathcal{S}^t}\sum_{m\in\mathcal{M}^t_n}\tilde{u}^t_{n,m}d^t_m
\end{align}
where $\tilde{u}^t_{n,m} \triangleq Q^t_{0,m} - Q^t_{n,m}$ is the delay reduction for user $m$ if it is served by SBS $n$. The above utility function assumes that the tasks from a user are independent, i.e., the utility of a task is immediately realized upon the receival of its own results and does not need to wait until all tasks of the user are completed. Therefore, the ASP concerns the service delay for each individual tasks of users instead of measuring the delay for completing all the tasks of a user in time slot $t$. Similar utility functions are also widely adopted in the existing literature \cite{fan2018workload}. The utility is essentially a weighted service popularity, where the weight is the reduced delay by deploying edge services compared to cloud computing. Clearly, other weights, such as task/user priority, can also be easily incorporated into our framework. 

\textbf{Remarks on delay model:} We use simple models to capture the service delay incurred by task transmission and processing. Note that other communication models (e.g., Massive MIMO) and computing models (e.g., queuing system) can also be applied depending on the practical system configuration. In these cases, the delay reduction should be recalculated accordingly. The algorithm proposed in this paper is compatible with other delay models as long as tasks' delay reduction $\tilde{u}^t_{n,m}$ can be obtained.

\subsection{Context-Aware Edge Service Provisioning}
A user's service demand depends on many factors, which are collectively referred to as the \textit{context}. For example, relevant factors can be demographic factors (e.g., age\footnote{Young people are more interested in Game Apps as shown in \cite{lim2015investigating}.}, gender), equipment type (e.g. smartphone, tablet), equipment status (e.g., battery levels \footnote{A device with low battery level tend to offload computational tasks to edge servers \cite{mao2016dynamic}.} ), as well as external environment factors (e.g., location, time, and events). This categorization is clearly not exhaustive and the impact of each single context dimension on the service demand is unknown a priori. These context information helps the ASP to understand the demand pattern of connected users and provide the edge service efficiently. Our algorithm will learn to discover the underlying connection between such context and users' service demand pattern (see an example of such a connection in Figure \ref{fig:oracle_demand_est} based on a real-world dataset in Section \ref{sec:simulation}), which will be discussed in detail in the next subsection \ref{subsec:problem_formulation}, thereby facilitating the service placement decision making.

At each SBS, a context monitor periodically gathers context information by accessing information about currently connected users and optionally by collecting additional information from external sources (e.g. social media platforms). However, collecting the user context sometimes faces a concern known as the privacy disclosure management \cite{lederer2003managing}, which decides when, where, and what personal information can be revealed.  The central notion behind privacy disclosure management is that people disclose different versions of personal information to different entities under different conditions \cite{lederer2003managing}. Therefore, the service area of an SBS (e.g. business building, apartment complex, and plaza) may influence users' privacy preference \cite{anthony2007privacy,bilogrevic2016machine} and hence determine what context information an SBS can access. To capture this feature, we allow each SBS to maintain its own user context space depending on its local users' privacy preference. This results in the heterogeneity of context spaces maintained by SBSs. Note that the context spaces of different SBSs may be completely different, partially overlapping or exactly the same. Our model captures the most general cases and all SBSs having the same context space is a special case of ours. Formally, let $D_n$ be the number of context dimensions monitored by SBS $n$ for its connected users. The monitored context space of SBS $n$ is denoted by $\mathcal{X}_n$ which is assumed to be bounded and hence can be denoted as $\mathcal{X}_n=[0,1]^{D_n}$ without loss of generality. Let $x^t_{n,m}\in\mathcal{X}_n, \forall m\in \mathcal{M}^t_n$ be the context vector of user $m$ monitored by SBS $n$ in time slot $t$. The context vectors of all users connected to SBS $n$ are collected in $\x^t_n=(x^t_{n,m})_{m\in\mathcal{M}^t_n}$.

\subsection{Problem Formulation} \label{subsec:problem_formulation}
Now, we formulate the edge service placement problem as a CC-MAB learning problem. In each time slot $t$, the edge system operates sequentially as follows: (i) each SBS $n\in\mathcal{N}$ monitors the context $x^t_{n,m}\in\mathcal{X}_n$ of all connected users $m\in\mathcal{M}^t_n$ and collects the context information in $\x^t_n=(x^t_{n,m})_{m\in\mathcal{M}^t_n}$. (ii) The ASP chooses a set of SBSs $\mathcal{S}^t$ with $|\mathcal{S}^t| = b$ based on the context information collected by all SBSs $\X^t=(\x^t_n)_{n\in\mathcal{N}}$ in the current time slot, and the knowledge learned from previous time slots. (iii) The users are informed about the current service placement decision $\mathcal{S}^t$. Till the end of the current time slot, users connected to SBSs in $\mathcal{S}^t$ can request edge computing service from these SBSs. (iv) At the end of the current slot, the service demand $d^t_{m}$ of user $m\in\mathcal{M}^t_n$ served by SBS $n\in \mathcal{S}^t$ is observed.

The service demand $d^t_m$ of user $m\in\mathcal{M}^t_n$ with context $x^t_{n,m}\in\mathcal{X}_n$ is a random variable with a unknown distribution. We denote this random service demand by $d(x^t_{n,m})$ and its expected value by $\mu(x^t_{n,m}) = \mathbb{E}[d(x^t_{n,m})]$. The random service demand is assumed to take values in $[0,d^{\max}]$, where $d^{\max}$ is the maximum possible number of tasks a user can have in one time slot. The service demand $\left(d(x^t_{n,m})\right)_{n\in\mathcal{N},m\in\mathcal{M}^t_n}$ is assumed to be independent, i.e., the service demands of users served by an SBS are independent of each other. Moreover, each $d(x^t_{n,m})$ is assumed to be independent of the past service provision decisions and previous service demands.

The goal of the ASP is to rent at most $b$ SBSs for edge service hosting in order to maximize the expected utility up to a finite time horizon $T$. Based on the system utility defined in \eqref{eq:utility}, the edge service placement problem can be formally written as:
\begin{subequations}\label{eq:P1}
	\begin{align}\label{obj}
	\textbf{P1:}~~\max_{(\mathcal{S}^t)_{t=1,\dots,T}} & ~~\sum_{t=1}^{T}\sum_{n\in\mathcal{S}^t}\sum_{m \in \mathcal{M}^t_n} \tilde{u}^t_{n,m}\mu(x^t_{n,m})\\
	\text{s.t.} &~~|\mathcal{S}^t| \leq b, \mathcal{S}^t \subseteq \mathcal{N},~\forall t.
	\end{align}
\end{subequations}

\subsection{Oracle Benchmark Solution}
Before presenting our bandit learning algorithm, we first give an oracle benchmark solution to \textbf{P1} by assuming that the ASP had a priori knowledge about context-specific service demand, i.e., for an arbitrary user $m$ with context vector $x_{n,m}\in\mathcal{X}_n, \forall m\in\mathcal{M}_n, n\in\mathcal{N}$, the ASP would know the expected demand $\mu(x_{n,m})$. It is obvious that \textbf{P1} can be decoupled into $T$ independent sub-problems, one for each time slot $t$:
\begin{subequations}\label{eq:per_slot}
	\begin{align}
	\textbf{P2}~~\max_{\mathcal{S}^t} & ~~\sum_{n\in\mathcal{S}^t}\sum_{m \in \mathcal{M}^t_n} \tilde{u}^t_{n,m}\mu(x^t_{n,m})\\
	\text{s.t.} &~~|\mathcal{S}^t| \leq b, \mathcal{S}^t\subseteq\mathcal{N}.
	\end{align}
\end{subequations}
The optimal solution to the subproblem \textbf{P2} in time slot $t$ can be easily derived in a running time of $O(|N|\log(N))$ as follows: Given the contexts $\X^t$ of connected users in time slot $t$, the optimal solution is to select the $b$ highest ranked SBSs (top-$b$ SBSs) $\left\{n^*_1(\X^t), n^*_2(\X^t),\dots, n^*_b(\X^t)\right\} \in \mathcal{N}$ which, for $j=1,\dots,b$, satisfy:
\begin{align}\label{eq_oracle_solution}
n^*_j(\X^t)\in \argmax_{n\in\mathcal{N}\backslash\bigcup^{j-1}_{k=1}n^*_k(\X^t)}~\sum_{m \in \mathcal{M}^t_n} \tilde{u}^t_{n,m}\mu(x^t_{n,m})
\end{align}
We denote by $\mathcal{S}^{*t} = \bigcup^{b}_{k=1}n^*_k(\X^t)$ the optimal oracle solution in time slot $t$. Consequently, the collection $\left(\mathcal{S}^{*t}\right)_{t=1}^{T}$ is the optimal oracle solution to \textbf{P1}.

However, in practice, the ASP does not have a priori knowledge about the service demand. In this case, ASP cannot simply solve \textbf{P1} as described above, since the expected service demands are unknown. Hence, an ASP has to learn the expected service demand over time by observing the users' contexts and service demand. For this purpose, the ASP has to make a trade-off between deploying edge services at SBSs where little information is available (\emph{exploration}) and SBSs which it believes to yield the highest demands (\emph{exploitation}). In each time slot, the ASP's service placement decision depends on the history of choices in the past and observed user context. An algorithm which maps the decision history to the current service placement decision is called a learning algorithm. The oracle solution $\left(\mathcal{S}^{*t}\right)_{t=1}^{T}$, is used as a benchmark to evaluate the loss of learning. The regret of learning with respect to the oracle solution is given by
\begin{align}\label{eq:regret}
	R(T)=\sum_{t=1}^T \left( \sum_{n\in\mathcal{S}^{*t}} \sum_{m\in\mathcal{M}^t_n} \mathbb{E}\left[\tilde{u}^t_{n,m}d(x^t_{n,m})\right] - \sum_{n\in\mathcal{S}^t}\sum_{m\in\mathcal{M}^t_n} \mathbb{E}\left[\tilde{u}^t_{n,m}d(x^t_{n,m})\right] \right)
\end{align}
Here, the expectation is taken with respect to the decisions made by the learning algorithm and the distributions of users' service demand.

\section{CC-MAB for Edge Service Placement}\label{sec:SEEN}
In order to place edge services at the most beneficial SBSs given the context information of currently connected users, the ASP should learn context-specific service demand for the connected users. According to the above formulation, this problem is a combinatorial contextual MAB problem and we propose an algorithm called SEEN (Spatio-temporal Edge sErvice placemeNt) for learning the context-specific service demand and solving \textbf{P1}.

\subsection{Algorithm Structure}
Our SEEN algorithm is based on the assumption that users with similar context information covered by the same SBS will have similar service demand. This is a natural assumption in practice, which can be exploited together with the users' context information to learn future service provisioning decisions. Our algorithm starts by partitioning the context space maintained by each SBS uniformly into small hypercubes, i.e. splitting the entire context space into parts of similar contexts. Then, an SBS learns the service demand independently in each hypercube of similar contexts. Based on the observed context information of all connected users and a certain control function, the algorithm is interspersed with exploration phases and exploitation phases. In the exploration phases, ASP chooses a random set of SBSs for edge service placement. These phases are needed to learn the local users' service demand patterns of SBSs which have not been chosen often before. Otherwise, the algorithm is in an exploitation phase, in which it chooses SBSs which on average gave the highest utility when rented in previous time slots with similar user contexts. After choosing the new set of SBSs, the algorithm observes the users' true service demand at the end of every time slot. In this way, the algorithm learns context-specific service demand over time. The design challenge lies in how to partition the context space and how to determine when to explore/exploit.

The pseudo-code of SEEN is presented in Algorithm \ref{alg:SEEN}. In the initialization phase, SEEN creates a partition $\mathcal{P}_{n,T}$ for each SBS $n$ given the time horizon $T$, which splits the context space $\mathcal{X}_n=[0,1]^{D_n}$ into $(h_{n,T})^{D_n}$ sets and these sets are given by $D_n$-dimensional hypercubes of identical size $\dfrac{1}{h_{n,T}}\times \dots \times \dfrac{1}{h_{n,T}}$. Here, $h_{n,T}$ is an input parameter which determines the number of hypercubes in the partition. Additionally, SBS $n$ keeps a counter $C^t_{n}(p)$ for each hypercube $p\in\mathcal{P}_{n,T}$ indicating the number of times that a user with context from hypercube $p$ connects to SBS $n$ when it was rented to host edge service up to time slot $t$. Moreover, SEEN also keeps an estimated demand $\hat{d}(p)$ for each hypercube $p\in\mathcal{P}_{n,T}$. Let $\mathcal{E}^t_{n}(p)$ be the set of observed service demand of users with context from set $p\in\mathcal{P}_{n,T}$ up to time slot $t$. Then, the estimated demand of users with context from set $p$ is given by the sample mean:
\begin{align}
	\hat{d}(p)=\dfrac{1}{|\mathcal{E}^t_n(p)|}\sum_{d\in\mathcal{E}^t_n(p)}d
\end{align}
where $|\mathcal{E}^t_n(p)|$ equals $C^t_{n}(p)$. Notice that the set $\mathcal{E}^t_{n}(p)$ does not need be stored since the estimated demand $\hat{d}(p)$ can be updated based on $\mathcal{E}^{t-1}_{n}(p), C^{t-1}_{n}(p)$, and observed demands in time slot $t$.

In each time slot $t$, SBS $n$ first observes the currently connected users $\mathcal{M}^t_n$ and their context $\x_n=(x^t_{n,m})_{m\in\mathcal{M}^t_n}$. For each piece of context information $x^t_{n,m}$, SEEN determines the hypercube $p^t_{n,m}\in\mathcal{P}_{n,T}$ to which the $x^t_{n,m}$ belongs, i.e., $x^t_{n,m}\in p^t_{n,m}$ holds. The collection of these hypercubes is given by $\p^t_n=(p^t_{n,m})_{m\in\mathcal{M}^t_n}$ for each SBS $n\in\mathcal{N}$, and $\P^t=(\p^t_n)_{n\in\mathcal{N}}$ for the whole network. Fig. \ref{fig:context} offers a simple illustration of the context hypercubes and the update of counters with a 2-D context space assuming three users are currently connected to SBS $n$.
\begin{figure}
	\centering
	\includegraphics[width=0.6\linewidth]{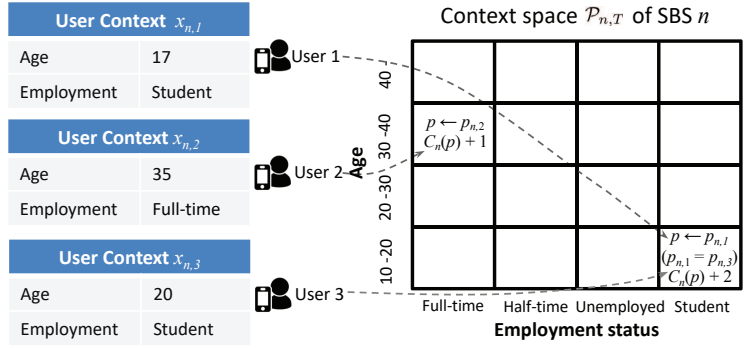}
	\caption{Illustration of context space and counters}
	\vspace{-0.3 in}
	\label{fig:context}
\end{figure}

Then the algorithm is in either an exploration phase or an exploitation phase. In order to determine the correct phase in the current time slot, the algorithm checks if there are SBSs that have not been explored sufficiently often. For this purpose, the set of \textit{under-explored} SBSs $\mathcal{N}^{\text{ue},t}$ are obtained in each time slot as follows:
\begin{align}\label{ue_SBS}
	\mathcal{N}^{\text{ue},t} =\{n:n\in\mathcal{N}, \exists~m\in \mathcal{M}^t_n, C^t_n(p^t_{n,m})<K_n(t)\}
\end{align}
where $K_n(t)$ is a deterministic, monotonically increasing control function, which is an input to the algorithm and has to be set appropriately to balance the trade-off between exploration and exploitation. In the next subsection, we will design a control function that guarantees a good balance in terms of this trade-off.


\begin{algorithm}[!t]
	\caption{Spatio-temporal Edge Service Provisioning (SEEN)} \label{alg:SEEN}
	\begin{algorithmic}[1]
		\State \textbf{Input}: $T$, $h_{n,T}$, $K_n(t)$.
		\State \textbf{Initialization} create partition $\mathcal{P}_{n,T}$; set $C^0_{n}(p)=0, \forall p \in \mathcal{P}_{n,T}, n\in\mathcal{N}$;
		\For {$t=1,\dots,T$}
		\State SBS $n\in\mathcal{N}$ observe currently connected users $\mathcal{M}^t_n$ and context $\x^t_n=(x^t_{n,m})_{m\in\mathcal{M}^t_n}$;
		\State Find $\p^t_n=(p^t_{n,m})_{m\in\mathcal{M}^t_n}$ such that $x^t_{n,m}\in p^t_{n,m}\in\mathcal{P}_{n,T}, \forall n\in\mathcal{N}, m\in\mathcal{M}^t_n$;
		\State Identify under-explored SBSs $\mathcal{N}^{\text{ue},t}$ in \eqref{ue_SBS} and let $q=\text{size}(\mathcal{N}^{\text{ue},t})$;
		\If {$\mathcal{N}^{\text{ue},t} \neq \emptyset$}: \Comment{\textit{Exploration}}
		\If{$q \geq b$}: $\mathcal{S}^t \leftarrow$ randomly rent $b$ SBSs from $\mathcal{N}^{\text{ue},t}$.
		\Else: $\mathcal{S}^t \leftarrow$ rent $q$ SBSs from $\mathcal{N}^{\text{ue},t}$, $(b-q)$ SBSs from $\left(\hat{n}_j(\X^t)\right)_{j=1}^{(b-q)}$ in \eqref{est_opt1};
		\EndIf
		\Else: $\mathcal{S}^t \leftarrow$ rent $b$ from $\left(\hat{n}_j(\X^t)\right)_{j=1}^{b}$ in \eqref{est_opt2}; \Comment{\textit{Exploitation}}
		\EndIf
		\State Observe service demand  $d_m$ of user $m$, $\forall m\in\mathcal{M}^t_n, \forall n\in\mathcal{S}^t$;
		\For {$n\in\mathcal{S}^t$ and $m\in\mathcal{M}^t_n $} \Comment{\textit{Demand estimation update}}
		\State Update estimated demand: $\hat{d}(p^t_{n,m})=\frac{\hat{d}(p^t_{n,m})C_n(p^t_{n,m})+d_{m}}{C_n(p^t_{n,m})+1}$; \label{line:demand_update}
		\State Update counters: $C_n(p^t_{n,m})=C_n(p^t_{n,m})+1$; \label{line:counter_update}
		\EndFor
		\EndFor
		\Statex \begin{footnotesize} The time indices of the counters $C^{t}_{n,p}$ are dropped in Line \ref{line:demand_update} and \ref{line:counter_update} due to recursive update. \end{footnotesize}	
	\end{algorithmic}
\end{algorithm}

If the set of under-explored SBSs is non-empty, SEEN enters the exploration phase. Let $q(t)$ be the number of under-explored SBSs. If the set of under-explored SBSs contains at least $b$ elements, i.e. $q(t)>b$, SEEN randomly rents $b$ SBSs from $\mathcal{N}^{\text{ue},t}$. If the number of under-explored SBS is less than $b$, i.e. $q(t)<b$, it selects $q(t)$ SBSs from $\mathcal{N}^{\text{ue},t}$ and $(b-q(t))$ additional SBSs are selected. These additional SBSs are those that have the highest estimated demand:
\begin{align}\label{est_opt1}
    \hat{n}^t_j(\X^t)\in \argmax_{n\in\mathcal{N}\backslash\{\mathcal{N}^{\text{ue},t},~\bigcup^{j-1}_{k=1}\hat{n}^t_k(\X^t)\}}~\sum_{m \in \mathcal{M}^t_n} \tilde{u}_{n,m}\hat{d}(p^t_{n,m}).
\end{align}
If the set of SBSs defined by \eqref{est_opt1} is not unique, ties are broken arbitrarily. If the set of under-explored SBSs is empty, then the algorithm enters the exploitation phase, in which it selects $b$ SBSs that have the highest estimated demand, as defined below:
\begin{align}\label{est_opt2}
\hat{n}^t_j(\X^t)\in \argmax_{n\in\mathcal{N}\backslash\bigcup^{j-1}_{k=1}\hat{n}^t_k(\X^t)}~\sum_{m \in \mathcal{M}^t_n}  \tilde{u}_{n,m}\hat{d}(p^t_{n,m}).
\end{align}

Finally, each chosen SBS observes the received service demand from users at the end of time slot $t$ and then updates the estimated service demand and the counters for each hypercube.

\subsection{Analysis of the Regret}
Next, we give an upper performance bound of the proposed algorithm in term of the \emph{regret}. The regret bound is derived based on the natural assumption that the expected service demands of users are similar in similar contexts. Because users' preferences of service demand differ based on their context, it is plausible for SBSs to divide its user population into groups with similar context and similar preferences. This assumption is formalized by the following H\"{o}lder condition for each SBS.
\begin{assumption}[H\"{o}lder Condition] \label{holder}
	For an arbitrary SBS $n\in\mathcal{N}$, there exists $L_n>0$, $\alpha_n>0$ such that for any $x,x^\prime\in\mathcal{X}_n$, it holds that
	\begin{align}
		|\mu(x)-\mu(x^\prime)| \leq L_n \|x-x^\prime\|^{\alpha_n}
	\end{align}
	where $\|\cdot\|$ denotes the Euclidean norm in $\mathbb{R}^{D_n}$.
\end{assumption}

We note that this assumption is needed for the analysis of the regret but SEEN can still be applied if it does not hold true. In that case, however, a regret bound might not be guaranteed. Under Assumption 1, the following Theorem shows that the regret of SEEN is sublinear in the time horizon $T$, i.e. $R(T)=O(T^{\gamma})$ with $\gamma<1$. This regret bound guarantees that SEEN has an asymptotically optimal performance, since $\lim_{T\to\infty} \frac{R(T)}{T}=0$ holds. This means that SEEN converges to the optimal edge service placement strategy used by the oracle solution. Specifically, the regret of SEEN can be bounded as follows for any finite time horizon $T$.

\begin{theorem}[Bound for $R(T)$] \label{theo:regret_bound}
	Let $K_n(t)=t^{\frac{2\alpha_n}{3\alpha_n+D_n}}\log(t)$ and $h_{n,T}=\lceil T^{\frac{1}{3\alpha_n+D_n}} \rceil$. If SEEN is run with these parameters and Assumption \ref{holder} holds true, the leading order of the regret $R(T)$ is $O\left(bN\tilde{u}^{\max}M^{\max}d^{\max} 2^{D_{\bar{n}}}T^{\frac{2\alpha_{\bar{n}}+D_{\bar{n}}}{3\alpha_{\bar{n}}+D_{\bar{n}}}}\log(T)\right)$, where $\bar{n}=\argmax_n \frac{2\alpha_n+D_n}{3\alpha_n+D_n}$.
\end{theorem}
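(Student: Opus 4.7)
The plan is to decompose $R(T)$ into three additive pieces — exploration regret $R_e(T)$, hypercube-approximation regret $R_a(T)$, and empirical-estimation regret $R_s(T)$ — and bound each separately. Here $R_e(T)$ aggregates loss from slots with $\mathcal{N}^{\mathrm{ue},t}\neq\emptyset$, $R_a(T)$ accounts for replacing each user's expected demand $\mu(x^t_{n,m})$ by the hypercube-level conditional mean $\mathbb{E}[d(x)\mid x\in p^t_{n,m}]$, and $R_s(T)$ captures the fluctuation of $\hat d(p)$ around that conditional mean during exploitation. Since per-slot loss is at most $b\tilde u^{\max}M^{\max}d^{\max}$, it suffices to upper bound $R_e(T)+R_a(T)+R_s(T)$.

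First I would bound $R_e(T)$ by counting exploration slots. By \eqref{ue_SBS}, an SBS $n$ lies in $\mathcal{N}^{\mathrm{ue},t}$ only because some connected user's context hypercube $p$ still satisfies $C^t_n(p)<K_n(t)$; each hypercube can trigger this at most $K_n(T)$ times, and there are $(h_{n,T})^{D_n}$ hypercubes per SBS, so the total number of exploration slots attributable to SBS $n$ is at most $(h_{n,T})^{D_n}K_n(T)$. With $h_{n,T}=\lceil T^{1/(3\alpha_n+D_n)}\rceil$ and $K_n(T)=T^{2\alpha_n/(3\alpha_n+D_n)}\log T$, this collapses to $O\!\bigl(T^{(2\alpha_n+D_n)/(3\alpha_n+D_n)}\log T\bigr)$; summing over the $N$ SBSs and multiplying by the per-slot worst case yields the leading term of the theorem, with the exponent attained at $\bar n=\argmax_n (2\alpha_n+D_n)/(3\alpha_n+D_n)$.

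Next I would dispatch $R_a(T)$ using Assumption~\ref{holder}. Any two contexts lying in the same hypercube of $\mathcal{P}_{n,T}$ differ by at most $\sqrt{D_n}/h_{n,T}$ in Euclidean norm, so the gap between their expected demands is at most $L_n(\sqrt{D_n})^{\alpha_n}h_{n,T}^{-\alpha_n}$. Consequently the per-slot discrepancy between the oracle score $\sum_m \tilde u_{n,m}\mu(x^t_{n,m})$ and its hypercube surrogate is $O(\tilde u^{\max}M^{\max}L_{\bar n}(\sqrt{D_{\bar n}})^{\alpha_{\bar n}}h_{\bar n,T}^{-\alpha_{\bar n}})$; summed over $T$ this gives $T\cdot h_{\bar n,T}^{-\alpha_{\bar n}}=T^{(2\alpha_{\bar n}+D_{\bar n})/(3\alpha_{\bar n}+D_{\bar n})}$, exactly matching the exploration order. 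The dimension-dependent constant $(\sqrt{D_{\bar n}})^{\alpha_{\bar n}}$, together with the combinatorial constants produced by the union bound below, is what I would overbound by $2^{D_{\bar n}}$ to arrive at the stated prefactor.

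The main obstacle is $R_s(T)$, which concerns exploitation slots where $C^t_n(p)\ge K_n(t)$ yet $\hat d(p)$ is still random. For each $(n,p)$ and $t$ I would introduce the good event $G^t_{n,p}=\{|\hat d^t(p)-\mathbb{E}[d(x)\mid x\in p]|\le H_T\}$ with confidence radius $H_T$ chosen of the same order $h_{\bar n,T}^{-\alpha_{\bar n}}$ as the Hölder bias. Because successive demands are independent (as assumed in Section~\ref{subsec:problem_formulation}) and bounded in $[0,d^{\max}]$, Hoeffding's inequality gives $\Pr(\lnot G^t_{n,p})\le 2\exp\!\bigl(-2K_n(t)H_T^2/(d^{\max})^2\bigr)$; the choice of $K_n(t)$ drives this to $O(t^{-2})$, so a union bound over at most $N(h_{\bar n,T})^{D_{\bar n}}$ hypercubes keeps $\sum_{t}\Pr(\text{bad})=O(1)$ and contributes only a constant to $R_s(T)$. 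On $G^t$, any SBS that SEEN ranks above some member of the oracle set $\mathcal{S}^{*t}$ differs in \emph{true} utility by $O(\tilde u^{\max}M^{\max}H_T)$, so the exploitation regret is dominated by $R_a(T)$. Adding the three bounds then produces the claimed regret of order $bN\tilde u^{\max}M^{\max}d^{\max}2^{D_{\bar n}}T^{(2\alpha_{\bar n}+D_{\bar n})/(3\alpha_{\bar n}+D_{\bar n})}\log T$, proving Theorem~\ref{theo:regret_bound}.
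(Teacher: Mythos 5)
Your overall architecture is sound and, apart from a reorganization of the exploitation regret, mirrors the paper's: the exploration count $(h_{n,T})^{D_n}K_n(T)$ per SBS and the H\"older discretization loss of order $T\,h_{n,T}^{-\alpha_n}$ are exactly the content of the paper's Lemma \ref{lemma:bound_R_e} and Lemma \ref{lemma:bound_R_n}, and your split of the exploitation error into ``discretization bias'' ($R_a$) plus ``estimation noise'' ($R_s$) is a legitimate alternative to the paper's split into suboptimal sets (utility gap at least $At^{\theta}$, killed by concentration) and near-optimal sets (gap below $At^{\theta}$, bounded deterministically).

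The one step that fails as written is the concentration bound inside $R_s(T)$. You fix the confidence radius at the $T$-scale, $H_T\asymp h_{\bar n,T}^{-\alpha_{\bar n}}\asymp T^{-z_{\bar n}/2}$ with $z_n=\tfrac{2\alpha_n}{3\alpha_n+D_n}$, and claim that Hoeffding with $K_n(t)=t^{z_n}\log t$ samples yields $\Pr(\lnot G^t_{n,p})=O(t^{-2})$. But the exponent is $2K_n(t)H_T^2/(d^{\max})^2\asymp (t/T)^{z_n}\log t$, which is $o(\log t)$ for every $t=o(T)$; hence the failure probability is not $O(t^{-2})$ over most of the horizon and $\sum_{t}\Pr(\text{bad})$ is not $O(1)$. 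The repair is to let the radius shrink with $t$ rather than with $T$: take $H(t)\propto t^{-z^{\min}/2}$, which is precisely the paper's choice $H(t)=b\tilde u^{\max}M^{\max}d^{\max}t^{-z^{\min}/2}$ in Lemma \ref{lemma:bound_R_s}, so that $K_n(t)H(t)^2\gtrsim\log t$ and the per-slot failure probability is genuinely $t^{-2}$. On the good event the per-slot exploitation loss becomes $O\bigl(\tilde u^{\max}M^{\max}(H(t)+L_{\bar n}D_{\bar n}^{\alpha_{\bar n}/2}h_{\bar n,T}^{-\alpha_{\bar n}})\bigr)$, and since $\sum_{t\le T}t^{-z^{\min}/2}=O(T^{1-z^{\min}/2})$ has the same order $T^{(2\alpha_{\bar n}+D_{\bar n})/(3\alpha_{\bar n}+D_{\bar n})}$ as the discretization term, the claimed leading order is recovered. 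With this single modification your argument goes through.
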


\begin{proof}
	See online Appendix \ref{proof:theorem_bound_R(T)} in \cite{onlineappendix}.
\end{proof}
Theorem \ref{theo:regret_bound} indicates that the regret bound achieved by the proposed SEEN algorithm is sublinear in the time horizon $T$. Moreover, the bound is valid for any finite time horizon, thereby providing a bound on the performance loss for any finite number of service placement decision cycles. This can be used to characterize the convergence speed of the proposed algorithm. In the special case of $b = 1$ and $\alpha_{n} = \alpha, D_n = D, \forall n$, the considered CC-MAB problem reduces to the standard contextual MAB problem. In this case, the order of the regret is $\tilde{O}(T^\frac{2\alpha+D}{3\alpha+D})$. We note that the regret bound, which although is still sublinear in $T$, is loose when the budget $b$ is close to $N$. Consider the special case of $b = N$, SEEN actually is identical to the naive optimal service placement policy (i.e. choose all $N$ SBSs to deploy the edge service) and hence, the actual regret is 0. It is intuitive that when the budget $b$ is large, learning is not very much needed and hence the more challenging regime is when the budget $b$ is small (but not 1). 

\subsection{Complexity and Scalability}
The memory requirements of SEEN is mainly determined by the counters and estimated context-specific demands kept by the SBSs. For each SBS $n\in\mathcal{N}$, it keeps the counters $C_n(p)$ and estimated demand $\hat{d}(p)$ for each hypercube in the partition $\mathcal{P}_{n,T}$. If SEEN is run with the parameters in Theorem \ref{theo:regret_bound}, the number of hypercubes is $(h_{n,T})^{D_n} = \lceil T^{\frac{1}{3\alpha_n + D_n}}\rceil^{D_n}$. Hence, the required memory is sublinear in the time horizon $T$. However, this means that when $T \to \infty$, the algorithm would require infinite memory. Fortunately, in the practical implementations, SBS only needs to keep the counters of such hypercubes $p$ to which at least one of its connected users' context vectors belongs. Hence the required number of counters that have to be kept is actually much smaller than the analytical requirement.

SEEN can be easily implemented with a large network without incurring a large overhead, since each SBS keeps counters and estimated user demands independently according to its maintained context space. At the beginning of each time slot, the ASP queries the SBSs about their status (explored or under-explored) and estimated utilities, and then chooses $b$ SBSs based on SEEN. Therefore, the number of SBSs does not complicate the algorithm much.

\section{Edge Service Placement for SBSs with Coverage Overlapping} \label{sec:overlapping}
So far we have considered the edge service placement problem for a set of non-overlapping SBSs. However, SBSs may be densely deployed in areas with large mobile traffic data and computation demand, which creates the overlapping of SBSs' coverage. In this case, a user can be possibly served by multiple SBSs, and therefore whether a user's service demand can be processed at the Internet edge is determined by the service availability at all reachable SBSs. This creates spatial coupling of service demand among overlapped SBSs, i.e., users observed by an SBS may send service requests to other nearby SBSs. Therefore, it is difficult for the ASP to optimize the service placement policies by considering the service availability at each SBS separately. In this section, we propose an algorithm SEEN-O which extends SEEN for small-cell networks with coverage overlapping.

\subsection{SBS Component and Component-wise Service Provisioning}
We start by introducing the \emph{SBS components} and \emph{component-wise decision}. SEEN-O first constructs an undirected graph $G=\langle\mathcal{N},\mathcal{E}\rangle$ based on the small-cell network. Each SBS $n\in\mathcal{N}$ corresponds to a vertex in $G$. For each pair of vertices $i,j \in \mathcal{N}$, an edge $e_{i,j} \in \mathcal{E}$ is added between them if and only if the service areas of the two SBSs have coverage overlapping. Based on the constructed graph $G$, we give the definition of \emph{component} as follows:
\begin{definition}[Component]
	A component of an undirected graph is a subgraph in which any two vertices are connected to each other by paths, and which is connected to no additional vertices.
\end{definition}
By the definition of component, we know that a set of overlapped SBSs correspond to a component $C\subseteq\mathcal{N}$ in graph $G$. Let $\mathcal{C}=\{C_1,C_2,\dots,C_K\}$ collect all $K$ components in graph $G$. For an arbitrary component $C_k \in \mathcal{C}$, we define a set of component-wise decisions $Z_k = \{z : z \subseteq C_k, z \neq \emptyset\}$, which collects all possible service placement decisions for SBSs in component $C_k$. The component-wise decision set $Z_k$ can also be written as $Z_k = \{z_{k,1},z_{k,2},\dots,z_{k,L}\}$, where the total number of decisions in $Z_k$ is given by the Bell number $L=\sum_{j=1}^{|C_k|} {|C_k| \choose j}$. For an arbitrary component $C_k$, if a component-wise decision $z_{k,l}\in Z_k$ is taken, then the ASP rents SBSs $ n\in z_{k,l}$ from the set of overlapping SBSs in $C_i$. Notice that the non-overlapping SBS network is a special case: for the components $C_k$ containing only one SBS $n$ (i.e., non-overlapping SBS), its component-wise decision set $Z_k$ contains only one element $z_{k,1}=\{n\}$. Let $\mathcal{Z}=\bigcup^K_{k=1} Z_k$ be component-wise decision sets for the whole network. Fig. \ref{fig:component} provides a simple illustration of the SBS components and component-wise decisions.
\begin{figure}[htb]
	\centering
	\includegraphics[width=0.8\linewidth]{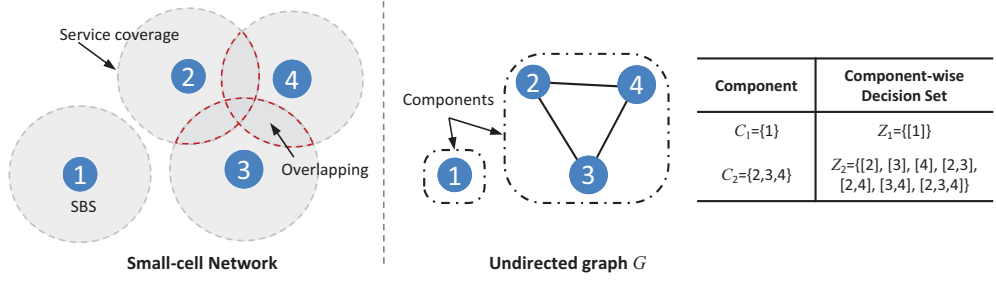}
	\caption{Illustration of SBS components and component-wise decisions. The component-wise decision set for the whole network is $\mathcal{Z} = \left\{[1], [2], [3], [4], [2,3], [2,4], [3,4], [2,3,4]\right\}$.}
	\label{fig:component}
	\vspace{-0.2 in}
\end{figure}

Instead of picking service placement decisions for individual SBSs separately, SEEN-O chooses component-wise decisions for components $C_k\in\mathcal{C}$ due to the fact that service demand received by an SBS is jointly decided by the service availability at SBSs in the same component. Let $\mathcal{M}_{C_k}$ denote the users collaboratively served by SBSs in component $C_k$. For a user $m\in \mathcal{M}_{C_k}$, it is able to request edge services from multiple SBSs in the component depending on the chosen component-wise decision $z_{k,l}\in Z_k$. Let $H^t_{n,m}$ be the uplink channel gain between user $m$ and SBS $n \in C_k$. If SBS $n$ is not reachable for user $m$, then $H^t_{n,m} = 0$. Usually, users' devices are energy-constrained and hence we assume that the service demand $d^t_m$ of user $m\in\mathcal{M}_{C_i}$ is offloaded to the SBS that has the best uplink channel condition among those that can provide edge service, namely $\arg\max_{n\in z_{k,l}} H^t_{n,m}$. In this way, users incur the least transmission energy consumption \footnote{Our algorithm is also compatible with other User-SBS association strategies}. The association decision $a_m$ of user $m$ can be formally written as:
\begin{align}
	a_m(z) = \argmax_{n \in C_k}~H^t_{n,m} \cdot\textbf{1}\{n \in z\}, m\in\mathcal{M}_{C_k}, z\in Z_k.
\end{align}
Note that the uplink channel conditions can be easily monitored by the users, and we also assume that the users report monitored channel conditions to all reachable SBSs. Therefore, the association decisions of user $m\in\mathcal{M}_{C_k}$ are known to the SBSs given the component-wise decision $z$. Let $\mathcal{M}_n(C_k,z)$ be the users connected to SBS $n \in z, z \subseteq C_k$, we have:
\begin{align}
	\mathcal{M}_n(C_k,z) = \{m : m \in \mathcal{M}_{C_k}, a_m(z) = n\}, n\in C_k, z\in Z_k.
\end{align}
In addition, for each SBS $n$ we define $\tilde{u}_n(z) = \sum_{m\in\mathcal{M}_n(C_k,z)} \tilde{u}_{n,m}\mu(x^t_{n,m})$, where $\tilde{u}_{n,m}$ is the delay improvement of user $m$ as defined in \eqref{eq:utility}. Let $\mathcal{S}^t_z \subseteq \mathcal{Z}$ be the component-wise decisions chosen by the ASP. Notice that the ASP can only draw at most one component-wise decision $z\in Z_k$ for each component $C_k$. Then, we have the edge service placement problem as follows:
\begin{subequations}\label{eq:P2}
	\begin{align}
	\textbf{P3:}~~~\max_{(\mathcal{S}_z^t)_{t=1,\dots,T}} &  ~~\sum_{t=1}^{T}\sum_{z\in\mathcal{S}_z^t} \sum_{n \in z} \tilde{u}_n (z) \\
	\text{s.t.} ~~ &\sum_{z\in\mathcal{S}^t_z} |z| \leq b,~~ \forall~t \label{eq:P2budget}\\
	&\sum_{z\in\mathcal{S}^t_z} {\textbf{1}\{z \in Z_k\}} \leq 1, ~~\forall~k,\forall~t \label{eq:conflict}
	\end{align}
\end{subequations}
where \eqref{eq:P2budget} is the budget constraint for the ASP and \eqref{eq:conflict} indicates that only one component-wise decision can be selected for each component.

\subsection{Disjunctively Constrained Knapsack Problem}
Now, we consider an oracle solution for \textbf{P3}. Similarly, \textbf{P3} can be decoupled into $T$ sub-problems.Yet, the solution for each sub-problem cannot be easily derived as in \eqref{eq_oracle_solution} due to different costs incurred by different component-wise decisions and, more importantly, the conflicts among component-wise decisions in \eqref{eq:conflict}. The per-slot subproblem of \textbf{P3} can be formulated as a \emph{Knapsack problem with Conflict Graphs} (KCG), which is also referred to as \emph{disjunctively constrained knapsack problem}. The conflict graph $G_c=\langle\mathcal{Z},\mathcal{E}_c\rangle$ is defined based on the component-wise decisions: Each component-wise decision $z\in\mathcal{Z}$ corresponds to a vertex in $G_c$. For an arbitrary pair of vertices $z,z^\prime\in\mathcal{Z}$, add an edge $e(z,z^\prime)\in\mathcal{E}_c$ between $z$ and $z^\prime$ if there exist a component-wise decision set $Z_k\in\mathcal{Z}$ such that $z,z^\prime \in Z_k$.

In the following, we convert \textbf{P3} to a standard formulation of the KCG problem. For each $z\in\mathcal{Z}$, we define a tuple $(p_z,c_z,y_z)$ where $p_z= \sum_{n \in z} \tilde{u}_n (z)$ is the profit of choosing component-wise decision $z$, $c_z$ is the cost of decision $z$ which equals $|z|$, and $y_z\in\{0,1\}$ indicates whether the decision $z$ is taken or not. Then, a KCG problem equivalent to \textbf{P3} can be written as:
\begin{subequations}
	\begin{align}
     	\textbf{P3-KCG}:~~~& \max~\sum_{z\in\mathcal{Z}}  p_zy_z\\
     	\text{s.t.} ~~ & \sum_{z \in \mathcal{Z}} c_zy_z\leq b\\
     		   & y_{z}+ y_{z^\prime} < 1, \forall z,z^\prime \in \mathcal{Z}, \forall e(z,z^\prime)\in\mathcal{E}_c\\
     		   & y_z \in \{0,1\}, \forall z\in\mathcal{Z}
	\end{align}
\end{subequations}

The above problem is an NP-hard combinatorial optimization problem. Existing works have proposed various algorithms, including heuristic solutions\cite{yamada2002heuristic} and exact solutions \cite{hifi2012algorithm} to solve KCG. In the simulation we employ the \emph{Branch-and-Bound} algorithm \cite{bettinelli2017branch} to solve \textbf{P3-KCG}.

\begin{algorithm}[htb]
	\caption{SEEN-O} \label{alg:SEEN_overlap}
	\begin{algorithmic}[1]
		\State \textbf{Input}: $T$, $h_{n,T}$, $K_n(t)$.
		\State \textbf{Initialization} context partition: $\mathcal{P}_{n,T}$; set $C^0_{n}(p)=0, \forall p \in \mathcal{P}_{n,T}, n\in\mathcal{N}$;
		\For {$t=1,\dots,T$}
		\State SBS $n\in\mathcal{N}$ observe currently connected users $\mathcal{M}^t_n$ and context $\x^t_n=(x^t_{n,m})_{m\in\mathcal{M}^t_n}$;
		\State Find $\p^t_n=(p^t_{n,m})_{m\in\mathcal{M}^t_n}$ such that $x^t_{n,m}\in p^t_{n,m}\in\mathcal{P}_{n,T}, \forall n\in\mathcal{N}, m\in\mathcal{M}^t_n$;
		\State Identify under-explored SBSs $\mathcal{N}^{\text{ue},t}$ using \eqref{ue_SBS} and let $q=\text{size}(\mathcal{N}^{\text{ue},t})$;
		\If {$\mathcal{N}^{\text{ue},t} \neq \emptyset$}: \Comment{\textit{Exploration}}
		\If{$q \geq b$}: ASP randomly rents $b$ SBSs from $\mathcal{N}^{\text{ue},t}$;
		\Else: ASP rents $q$ SBSs from $\mathcal{N}^{\text{ue},t}$;
		\State \quad Identify $\tilde{\mathcal{Z}}$ and select other $(b-q)$ SBSs by solving the KCG with $\tilde{\mathcal{Z}}, \tilde{c}_z, \tilde{p}_z, \tilde{b}$;
		\EndIf
		\Else:  Select $b$ SBSs by solving \textbf{P3-KCG} with current demand estimation; \Comment{\textit{Exploitation}}
		\EndIf
		\For {each user $m$ served by SBSs}: \Comment{\textit{Demand estimation update}}
		\For {each SBS that covers user $m$, i.e., $n\in\{n:m\in\mathcal{M}_n, n\in\mathcal{N}\}$}
		\State update demand estimation: $\hat{d}(p^t_{n,m})=\frac{\hat{d}(p^t_{n,m})C_n(p^t_{n,m})+d_{m}}{C_n(p^t_{n,m})+1}$; \label{line:demand_update_o}
		\State update counters: $C_n(p^t_{n,m})=C_n(p^t_{n,m})+1$; \label{line:counter_update_o}
		\EndFor
		\EndFor
		\EndFor
	\end{algorithmic}
\end{algorithm}

\subsection{Algorithm Structure}
Now, we give SEEN-O in Algorithm \ref{alg:SEEN_overlap} for edge service placement with coverage overlapping. Similar to SEEN, SEEN-O also has two phases: exploration and exploitation. We first obtain the set of \textit{under-explored} SBSs $\mathcal{N}^{\text{ue},t}$ as in \eqref{ue_SBS} based on users in their coverage $\mathcal{M}_n$ \footnote{$\mathcal{M}_n$ is the set of users within the coverage of SBS $n$. Note that it is different from $\mathcal{M}_n(C_k, z)$ which denotes the users served by SBS $n$ depending on the component-wise decisions.}. If the set of under-explored SBSs is non-empty, namely $\mathcal{N}^{\text{ue},t} \neq \emptyset$, then SEEN-O enters the exploration phase. Let $q$ be the number of under-explored SBSs. If the set of under-explored SBSs contains at least $b$ elements $q>b$, SEEN-O randomly rents $b$ SBSs from $\mathcal{N}^{\text{ue},t}$. If the number of under-explored SBS is less than $b$, i.e. $q<b$, SEEN-O first selects $q$ SBSs from $\mathcal{N}^{\text{ue},t}$ and $(b-q)$ SBSs are selected by solving a KCG problem based on the following component-wise decisions:
\begin{align} \label{tilde_Z}
&\tilde{Z}_k = Z_k \backslash \left\{Z_k^{\prime} \cup Z_k^{\prime\prime}\right\}\\
\text{where}~~ Z_k^{\prime} = \{z :& z \in Z_k,~\text{if}~\exists~n \in \mathcal{N}^{\text{ue},t}, \{n\} = z \}\\
Z_k^{\prime\prime} = \{z :& z \in Z_k,~\text{if}~\exists~n \in \mathcal{N}^{\text{ue},t} \cap C_k, n \notin z \}
\end{align}
$Z_k^{\prime}$ is the set of one-element component-wise decision $\{n\}, \forall n \in \mathcal{N}^{\text{ue},t}$. The decisions in $Z_k^{\prime}$ need to be removed from $Z_k$ since they have already been chosen by ASP; $Z_k^{\prime\prime}$ collects component-wise decisions for $C_k$ which do not contain the under-explored SBSs $n \in \{\mathcal{N}^{\text{ue},t} \cap C_k\}$. The decisions in $Z_k^{\prime\prime}$ are also removed since the component-wise decision for $C_k$ must contain all the under-explored SBSs in $C_k$. Then, ASP solves a KCG problem with the constructed component-wise decision set $\tilde{\mathcal{Z}}=\{\tilde{Z}_1,\dots,\tilde{Z}_K\}$, decision $\tilde{c}_z = |z\backslash \mathcal{N}^{\text{ue},t}|$, decision profit $\tilde{p}_z= \sum_{n \in z\backslash \mathcal{N}^{\text{ue},t}} \tilde{u}_n (z)$, and the modified budget $\tilde{b} = b-q$. If the set of under-explored SBSs is empty, then the algorithm enters the exploitation phase. It solves the \textbf{P3-KCG} based on the current context-specific demand estimation with all component-wise decision $\mathcal{Z}$ and budget $b$.

At the end of each time slot, SBSs observe service demand received from the connected users. Then, each SBS updates the estimated service demand and the counters for each context hypercube. Notice that in the overlapping case, a user can be covered by multiple SBSs and therefore, the observed service demand can be used to update the estimated service demand at multiple SBSs. For example, if a user is in the coverage of SBS $i$ and SBS $j$, namely $m\in \mathcal{M}_i\cap\mathcal{M}_j$. Then, the observed service demand of this user can be used to update the context-specific service demand estimation at both SBS $i$ and SBS $j$. This also means that SEEN-O can learn the reward of multiple component-wise decisions in one time slot, e.g. if component-wise decision $[i,j]$ is taken. The utility of component-wise decisions $[i],[j],[i,j]$ can be updated at the same time. Theorem 2 proves that SEEN-O has the same regret bound as SEEN.
\begin{theorem}[Regret Bound for SEEN-O] \label{theo:regret_bound_seen_O}
SEEN-O has the same regret bound as SEEN.
\end{theorem}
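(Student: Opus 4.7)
The plan is to mirror the proof of Theorem~\ref{theo:regret_bound} for SEEN, modifying only the parts that depend on the coupling structure induced by overlapping coverage. I would decompose the total regret as $R(T)=R_{\text{explore}}(T)+R_{\text{exploit}}(T)$ and bound each term separately, relying on the fact that SEEN-O learns demand at the same granularity (per-SBS, per-hypercube) and uses the same control function $K_n(t)=t^{2\alpha_n/(3\alpha_n+D_n)}\log t$ and partition $\mathcal{P}_{n,T}$ as SEEN.

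For $R_{\text{explore}}(T)$, the exploration trigger \eqref{ue_SBS} and the per-hypercube counters $C_n^t(p)$ are structurally identical to SEEN. Hence the counting argument that bounds the number of exploration slots (summing $K_n(T)$ over all hypercubes and SBSs) carries over verbatim; I would simply note that each exploration slot still contributes at most $b\tilde{u}^{\max}M^{\max}d^{\max}$ to the regret, regardless of how the $b$ SBSs are split between $\mathcal{N}^{\text{ue},t}$ and the KCG-selected complement. This gives the same leading-order exploration regret as in Theorem~\ref{theo:regret_bound}.

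For $R_{\text{exploit}}(T)$, the argument is in two stages. First, I would reuse the Chernoff--Hoeffding concentration bound from the SEEN proof to show that on the good event, for every sufficiently explored hypercube $p\in\mathcal{P}_{n,T}$ the estimate $\hat d(p)$ is within $O(t^{-\alpha_n/(3\alpha_n+D_n)})$ of $\mu(x)$ for any context $x\in p$, combining a sampling-error term $\sqrt{\log t/K_n(t)}$ with the H\"older approximation error $L_n(\sqrt{D_n}/h_{n,T})^{\alpha_n}$ guaranteed by Assumption~\ref{holder}; the fact that in SEEN-O a user in an overlap region updates estimates at several SBSs simultaneously only strengthens $C_n^t(p)$ and can only help, so the same concentration bound holds at each $(n,p)$ pair. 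Second, I would translate this per-user accuracy into near-optimal component-wise selection: for any feasible $\mathcal{S}_z\subseteq\mathcal{Z}$ satisfying \eqref{eq:P2budget}--\eqref{eq:conflict}, the total profit involves at most $bM^{\max}$ users, so $|p(\mathcal{S}_z)-\hat p(\mathcal{S}_z)|\le bM^{\max}\tilde u^{\max}\varepsilon(t)$ where $\varepsilon(t)$ is the per-user error. Since the oracle $\mathcal{S}_z^{*t}$ and the SEEN-O choice $\mathcal{S}_z^t$ are both feasible, and $\hat p(\mathcal{S}_z^t)\ge \hat p(\mathcal{S}_z^{*t})$ by optimality of the KCG solver on estimated profits, the standard decomposition
\begin{equation*}
p(\mathcal{S}_z^{*t})-p(\mathcal{S}_z^t)\le \bigl[p(\mathcal{S}_z^{*t})-\hat p(\mathcal{S}_z^{*t})\bigr]+\bigl[\hat p(\mathcal{S}_z^t)-p(\mathcal{S}_z^t)\bigr]\le 2bM^{\max}\tilde u^{\max}\varepsilon(t)
\end{equation*}
bounds the per-slot exploitation regret. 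Summing over $t$ and plugging in the same choices of $h_{n,T}$ and $K_n(t)$ reproduces the $T^{(2\alpha_{\bar n}+D_{\bar n})/(3\alpha_{\bar n}+D_{\bar n})}\log T$ leading order.

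The main obstacle is keeping the combinatorial blow-up at bay. Naively, there are up to $\sum_{k}(2^{|C_k|}-1)$ component-wise arms in $\mathcal{Z}$, and a union bound over all of them would ruin the rate; the resolution is to union-bound only over the $(n,p)$ estimator events (finitely many, as in SEEN) and then use the telescoping inequality above, which depends on $|\mathcal{Z}|$ only through the budget $b$ and not through $|\mathcal{Z}|$ itself. A secondary subtlety is to verify that the oracle benchmark in \eqref{eq:regret} is understood as solving \textbf{P3-KCG} exactly, so that the NP-hardness of KCG does not enter the regret statement: both the oracle and SEEN-O are assumed to solve KCG exactly (e.g.\ by branch-and-bound), and the regret measures only the cost of \emph{learning} the profits, not the cost of \emph{solving} the knapsack. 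With these two points handled, the rest of the analysis is a direct transcription of the SEEN proof.
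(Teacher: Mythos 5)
Your proposal is correct and reaches the same conclusion by largely the same overall strategy, but your treatment of the exploitation regret takes a genuinely different route from the paper's. The paper's proof is essentially a one-paragraph reduction to Lemmas \ref{lemma:bound_R_e}--\ref{lemma:bound_R_n}: it keeps the three-way decomposition into exploration regret, suboptimal-choice regret (bounded by showing subsets with gap at least $At^\theta$ are chosen with probability $O(t^{-2})$ via Chernoff--Hoeffding, at the cost of a ${N \choose b}$-type union bound over arms), and near-optimal-choice regret (bounded deterministically by the H\"older approximation error). You instead bound the entire exploitation regret in one step by the high-probability decomposition $p(\mathcal{S}_z^{*t})-p(\mathcal{S}_z^t)\le\bigl[p(\mathcal{S}_z^{*t})-\hat p(\mathcal{S}_z^{*t})\bigr]+\bigl[\hat p(\mathcal{S}_z^t)-p(\mathcal{S}_z^t)\bigr]$, union-bounding only over the $(n,p)$ estimator events rather than over arms. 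This buys something real: the paper's Lemma \ref{lemma:bound_R_s}, transcribed literally to the overlapping setting, would replace ${N\choose b}$ by a count over feasible combinations of component-wise decisions, which can be exponentially larger in the component sizes; the paper simply asserts the lemma "is the same" without addressing this, whereas your argument sidesteps it entirely (and your remark that the regret measures only the cost of learning, not of solving the KCG exactly, is a clarification the paper leaves implicit). One point of emphasis you should correct, however: the simultaneous update of counters at \emph{all} covering SBSs is not merely a bonus that "strengthens $C_n^t(p)$ and can only help" the concentration --- it is the load-bearing observation of the paper's proof. Without it, an under-explored SBS $n$ rented during exploration might serve no users (all of them associating with a better-channel neighbor in the same component), its counters for the triggering hypercubes would never advance, and the counting argument behind Lemma \ref{lemma:bound_R_e}, which you claim "carries over verbatim," would fail because the number of exploration slots attributable to $n$ would no longer be bounded by $(h_{n,T})^{D_n}\lceil T^{z_n}\log T\rceil$. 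You have the ingredient; it just needs to be invoked in the exploration bound, where it is necessary, rather than only in the exploitation bound, where it is merely convenient.
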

\begin{proof}
	See online Appendix \ref{proof:theorem_bound_seen_o} in \cite{onlineappendix}.
\end{proof}
The regret upper bound for SEEN-O in Theorem 2 is valid for any edge network layout and does not require any assumption on SBS deployment and user population distribution. This helps to carry out SEEN-O in a practical application since, in most cases, the SBS deployment is revealed to ASP though, the user distribution is unknown a priori.

\section{Simulation}\label{sec:simulation}
In this section, we carry out simulations on a real-world dataset to evaluate the performance of the proposed algorithms. We use the data collected in \cite{lim2015investigating} which aims to reveal the underlying link between the demand for mobile applications and the user context including age, gender, occupation, years of education, device type (e.g. phone, tablet, and laptop), and nationality. It collects the context information of a total of 10,208 end users and the users' demand for 23 types of mobile applications. We envision that these mobile applications can be deployed on edge servers at SBSs via containerization and the UEs can send computing tasks to SBS for processing. In our simulation, we consider that the ASP aims to provide edge service for \emph{Game-type} application (the most popular application out of 23 mobile applications investigated in \cite{lim2015investigating}), which is also a major use case of edge computing. Fig. \ref{fig:userdist_demand} and Fig \ref{fig:userdist_nodemand} depict the user distribution, and Fig. \ref{fig:oracle_demand_est} depicts the context-specific service demand estimation on the two context dimensions \emph{Age} and \emph{Years of education}. We see clearly that the users' demand pattern is very related to the users' context information. Note that the \emph{Age} and \emph{Years of education} information is obtained from the dataset \cite{lim2015investigating} and is used only as an example to illustrate the context-demand relationship. In practice, users may be willing to disclose such information in enterprise or campus internal networks. For the more general scenario, SBSs can use other less sensitive context such as user device information.

\begin{figure}[htb]
	\subfigure[Users with service demand]{\label{fig:userdist_demand}
		\includegraphics[width=0.31\textwidth]{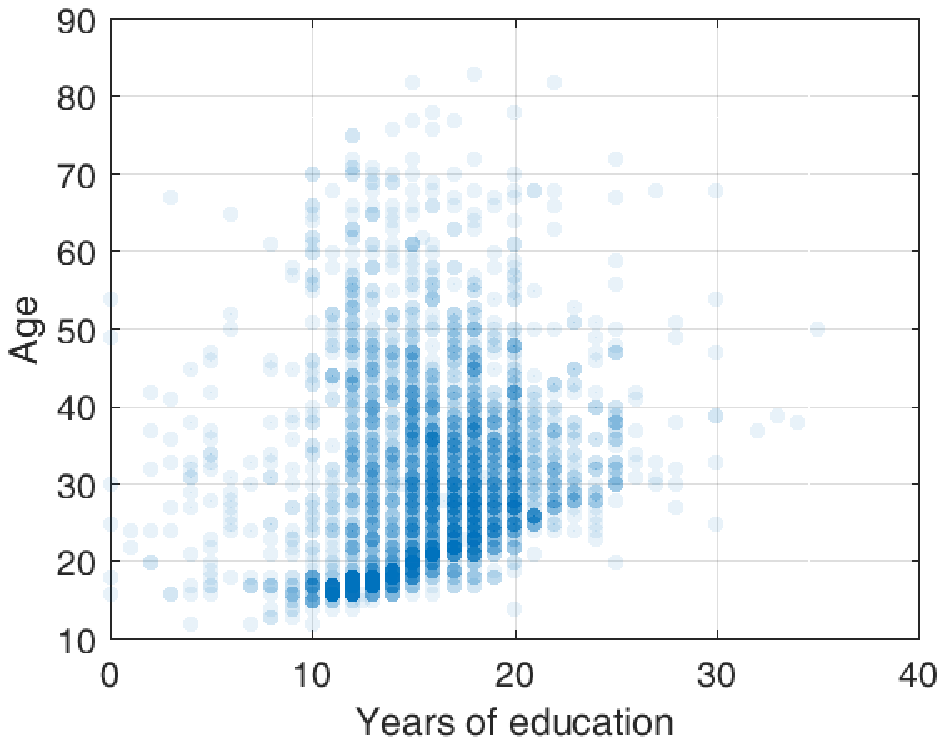}}
	\subfigure[Users with no service demand]{\label{fig:userdist_nodemand}
		\includegraphics[width=0.31\textwidth]{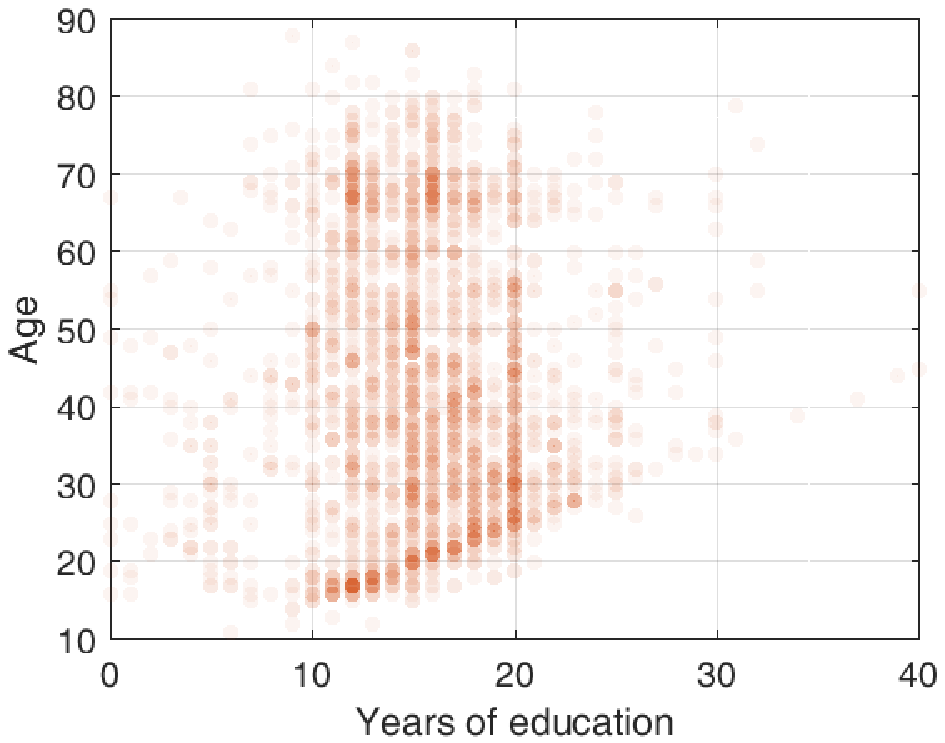}}
	\subfigure[Oracle demand estimation]{\label{fig:oracle_demand_est}
		\includegraphics[width=0.34\textwidth]{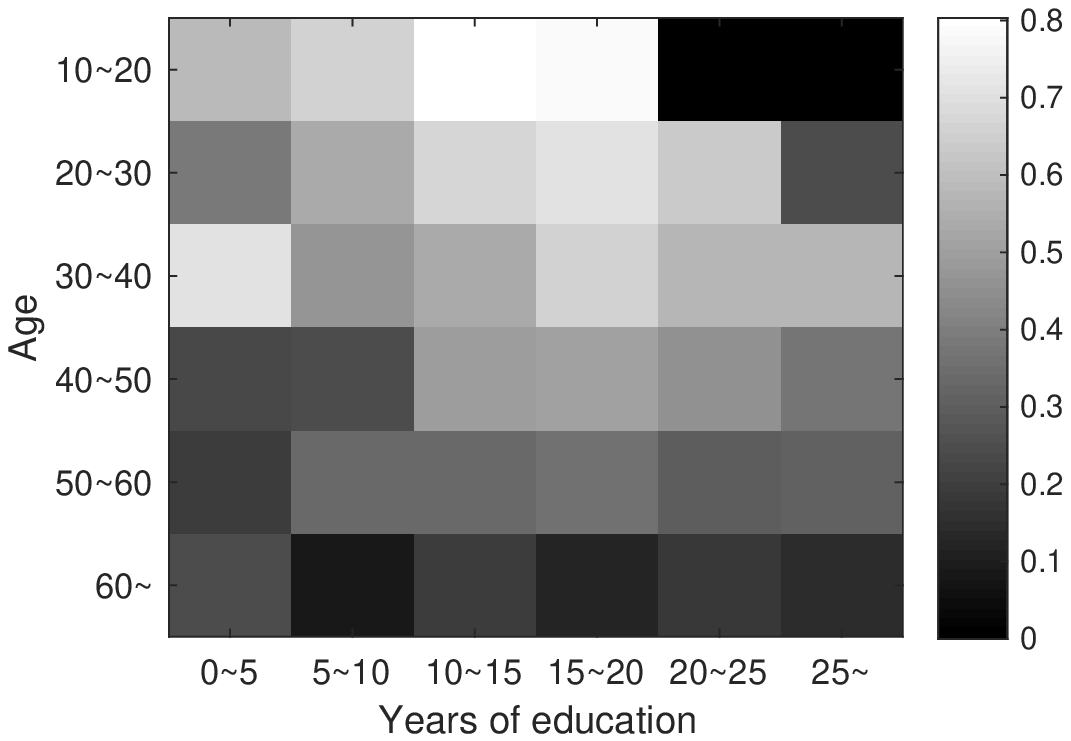}}
	
	\caption{User demand pattern on \emph{age} and \emph{year of education} dimension} \label{fig:userdist}
	\vspace{-0.25 in}
\end{figure}


For the small-cell network, we simulate a 1000m$\times$1000m area served by $N=10$ SBSs and one MBS. The SBSs are randomly scattered in this area. An SBS can serve users within the service range 150m, which tends to create coverage overlapping among SBSs. For the analysis of non-overlapping SBSs, we assume that users request edge service only from the nearest SBSs; while, in the overlapping case, a user is allowed to decide its association based on the service availability and channel condition of reachable SBSs. To capture different compositions of user population across different SBSs, we randomly assign one out of three area types (\emph{school zone, business area, and public}) to each SBS, where users with \emph{student} occupation context tend to show up in school zones with a higher probability, users with \emph{full-time worker} tend to show up in business areas, and all types of users show up in public with the same probability. The default value of ASP budget is set as $b=3$. Other key simulation parameters are: channel bandwidth $W=20$MHz, transmission power of user equipment $P^{u}_m=10$dBm, noise power $\sigma^2=10^{-10}$W/Hz, CPU frequency at SBSs $f_n=2.8$GHz, CPU frequency at the cloud $f_0 = 5.6$GHz, Internet backhaul transmission rate $v^t \in [10,20]$Mbps, round-trip time $h = 100$ms.

The proposed algorithm is compared with the following benchmarks:\\
(1) Oracle algorithm: Oracle knows precisely the expected demand for any user context. In each time slot, Oracle selects $b$ SBSs that maximize the expected system utility as in \eqref{eq_oracle_solution} based on the observed user context.\\
(2) Combinatorial UCB (cUCB)\cite{chen2013combinatorial}: cUCB is developed based on a classic MAB algorithm, UCB1. The key idea is to create super-arms, i.e., $b$-element combination of $N$ SBS ($b$ is the budget). There will be a total of $N \choose b$ super-arms and cUCB learns the reward of each super-arm.\\
(3) Combinatorial-Contextual UCB (c$^2$UCB): c$^2$UCB considers users' context when running cUCB. Specifically, c$^2$UCB maintains a context space for each super-arm and the utility estimations of hypercubes in a context space are updated when corresponding super-arm is selected.\\
(4) $\epsilon$-Greedy: $\epsilon$-Greedy rents a random set of $b$ SBSs with probability $\epsilon\in(0,1)$. With probability $(1-\epsilon)$, the algorithm selects $b$ SBSs with highest estimated demands. These estimated demands are calculated based on the previously observed demand of rented SBSs.\\
(5) Random algorithm: The algorithm simply rents $b$ SBSs randomly in each time slot.

\subsection{Performance Comparison}
Fig. \ref{fig:cum_sys_utility} shows the cumulative system utility achieved by SEEN and other 5 benchmarks for a non-overlapping case. As expected, the Oracle algorithm has the highest cumulative system utility and gives an upper bound to the other algorithms. Among the other algorithms, we see that SEEN and c$^2$UCB significantly outperform cUCB, $\epsilon$-Greedy, and Random algorithm, since they take into account the context information when estimating the users' service demand pattern. Moreover, SEEN achieves a higher system utility compared with c$^2$UCB. This is due to the fact that c$^2$UCB creates a large set of super-arms and therefore is more likely to enter the exploration phase. The conventional algorithms, cUCB, and $\epsilon$-Greedy, only provide slight improvements compared to the Random algorithm. The failure of these methods is due to the uncertainty of user population in various aspects, e.g. user numbers and composition, which are difficult to estimate in each time slot without observing the user context information.
%
\begin{figure*}[htb]
	\begin{minipage}[t]{0.5\linewidth}
		\centering
		\includegraphics[width=0.95\linewidth]{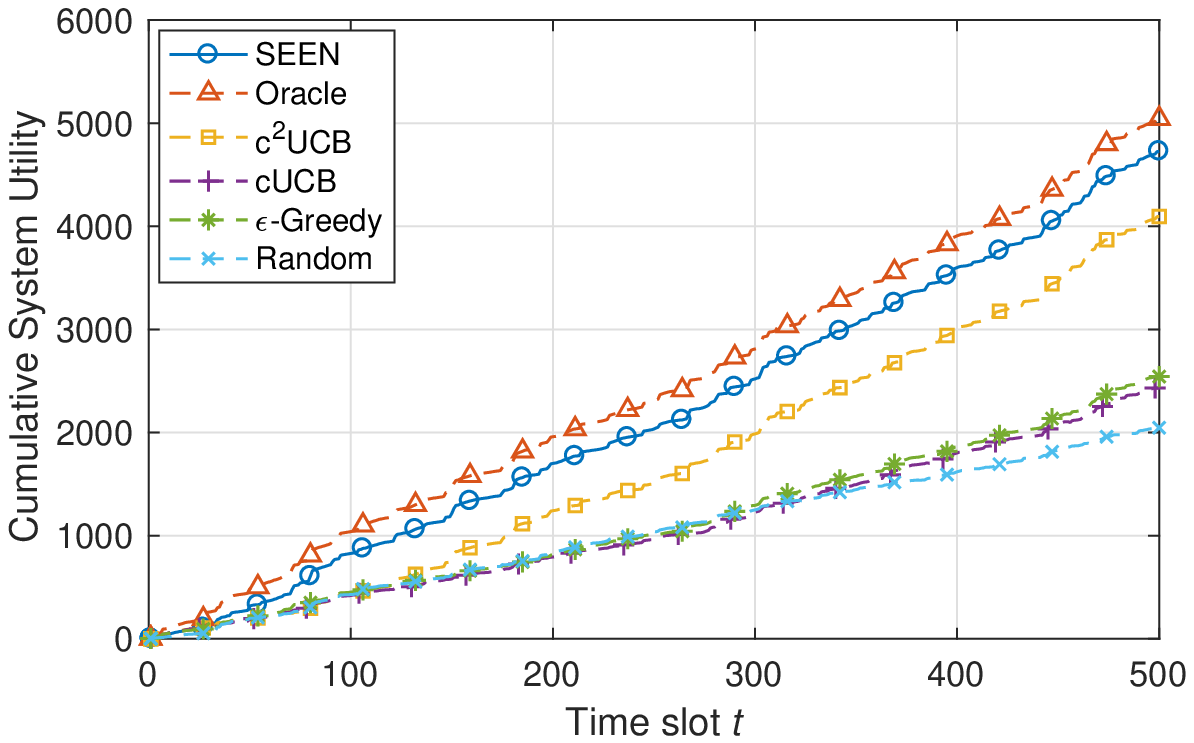}
		\vspace{-0.1 in}
		\caption{Comparison on cumulative system utility.}
		\label{fig:cum_sys_utility}
	\end{minipage}%
	\begin{minipage}[t]{0.5\linewidth}
		\centering
		\includegraphics[width=0.95\linewidth]{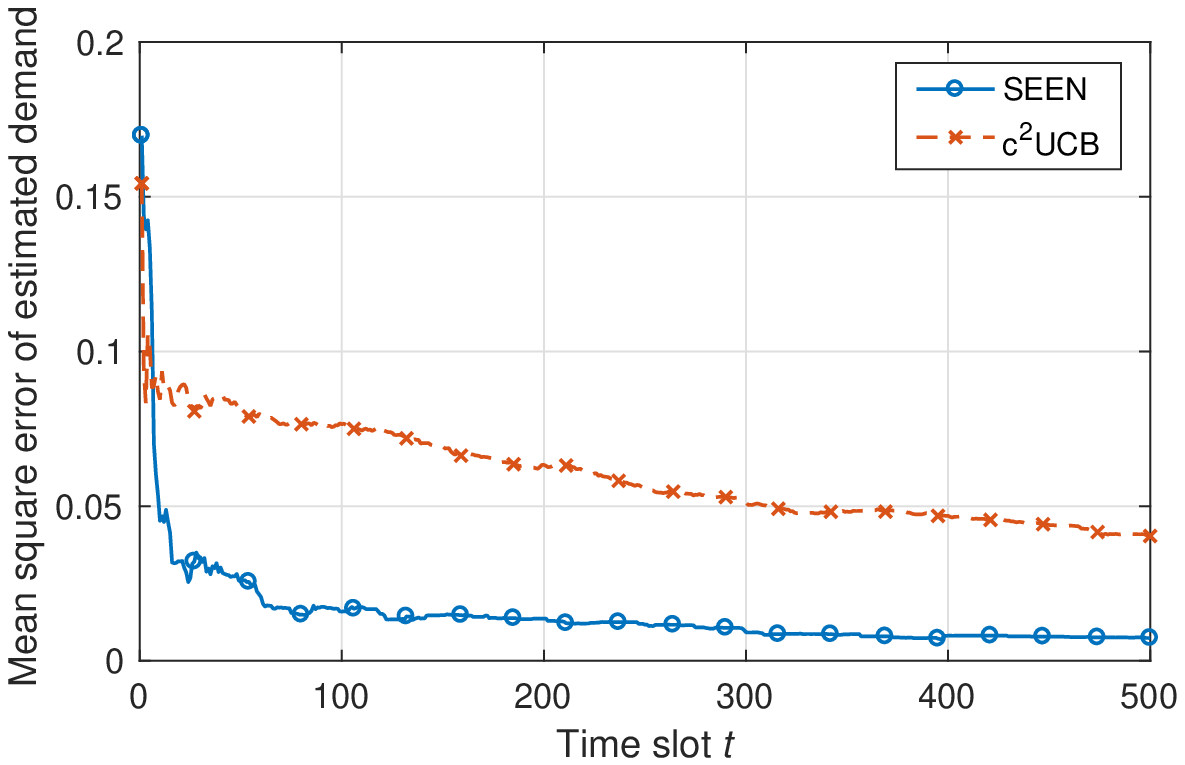}
		\vspace{-0.1 in}
		\caption{MSE of estimated service demand}
		\label{fig:MES_est_demand}
	\end{minipage}%
	\vspace{-0.1 in}
\end{figure*}

\subsection{Demand estimation error}
Fig. \ref{fig:MES_est_demand} shows the mean square error (MSE) of service demand estimation achieved by SEEN and c$^2$UCB, where the MSE is measured across all context hypercubes compared to the oracle demand estimation. It can be observed that the MSE of SEEN converges quickly to 0.01 after first 120 time slots while the MSE of c$^2$UCB stays high and decreases slowly during 500-slot runtime. This means that SEEN is able to learn the user demand pattern fast and provide more effective decisions on edge service placement.

\subsection{Demand allocation}
Fig. \ref{fig:demand_allocation} shows the allocation of user demand in the network, i.e., whether the demand is processed at the edge or cloud. Note that ASP desires to process more demand at the edge so that lower delay costs are incurred to users. We can see from Fig. \ref{fig:demand_allocation} that SEEN is able to accommodate a large amount of user demand 62.2\%, which is slightly lower than that of Oracle (69.2\%). For other four schemes, they rely heavily on the cloud server, therefore incurring large delay cost and diminishing the system utility.
\begin{figure}[htb]
	\centering
	\includegraphics[width=0.5\linewidth]{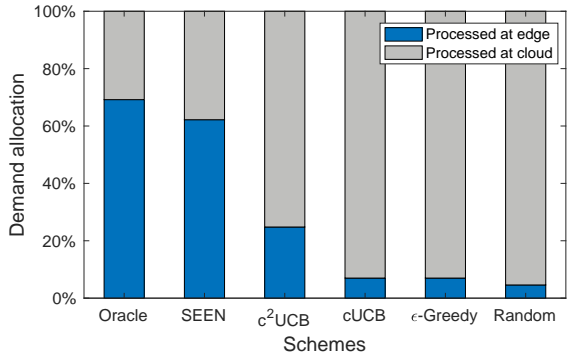}
	\caption{User demand allocation}
	\label{fig:demand_allocation}
	\vspace{-0.3 in}
\end{figure}

\subsection{Learning with More Context}
Next, we evaluate the performance of SEEN under different context spaces. Figure \ref{fig:context_num} shows the cumulative system utilities achieved by SEEN and 5 other benchmarks when running with 2, 3, 4 contexts. Comparing these three figures, we see that the cumulative system utilities achieved by cUCB, $\epsilon$-Greedy, and Random stay more or less the same, since these algorithms are independent of the context information. The context-aware algorithms, i.e., SEEN, Oracle, and c$^2$UCB, achieve higher cumulative utilities with more context information since more contexts help the ASP to learn the users' demand pattern and therefore make better service provisioning decision. In addition, it is worth noticing that SEEN incurs larger regrets when running with more context information, which is consistent with the analysis in Theorem \ref{theo:regret_bound}. 
\begin{figure}[htb]
	\centering	
	\subfigure[\emph{age}, \emph{employment status}]{\label{fig:cum_sys_utility_cn2}
		\includegraphics[width=0.45\linewidth]{Figure/cum_sys_utility.eps}}
	\subfigure[\emph{age}, \emph{employment status}, \emph{marital status}]{\label{fig:cum_sys_utility_cn3}
		\includegraphics[width=0.45\linewidth]{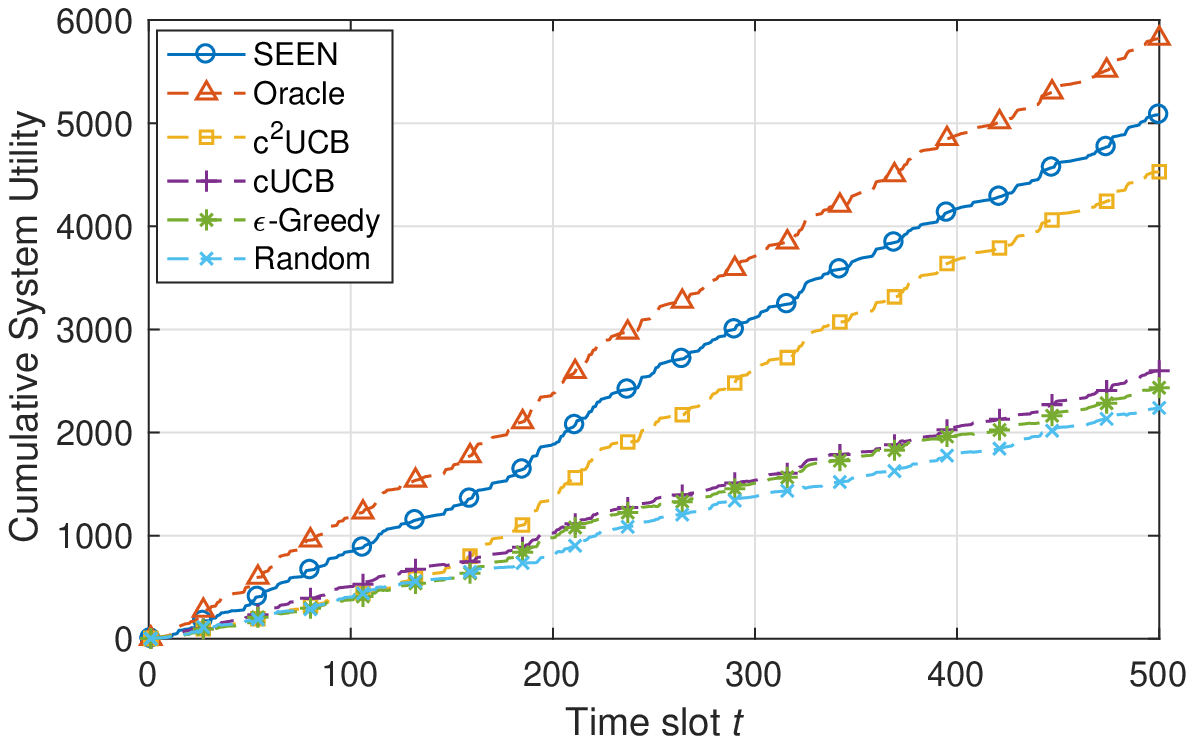}}
	\subfigure[\emph{age}, \emph{download purpose}, \emph{employment and marital status}]{\label{fig:cum_sys_utility_cn4}
		\includegraphics[width=0.45\linewidth]{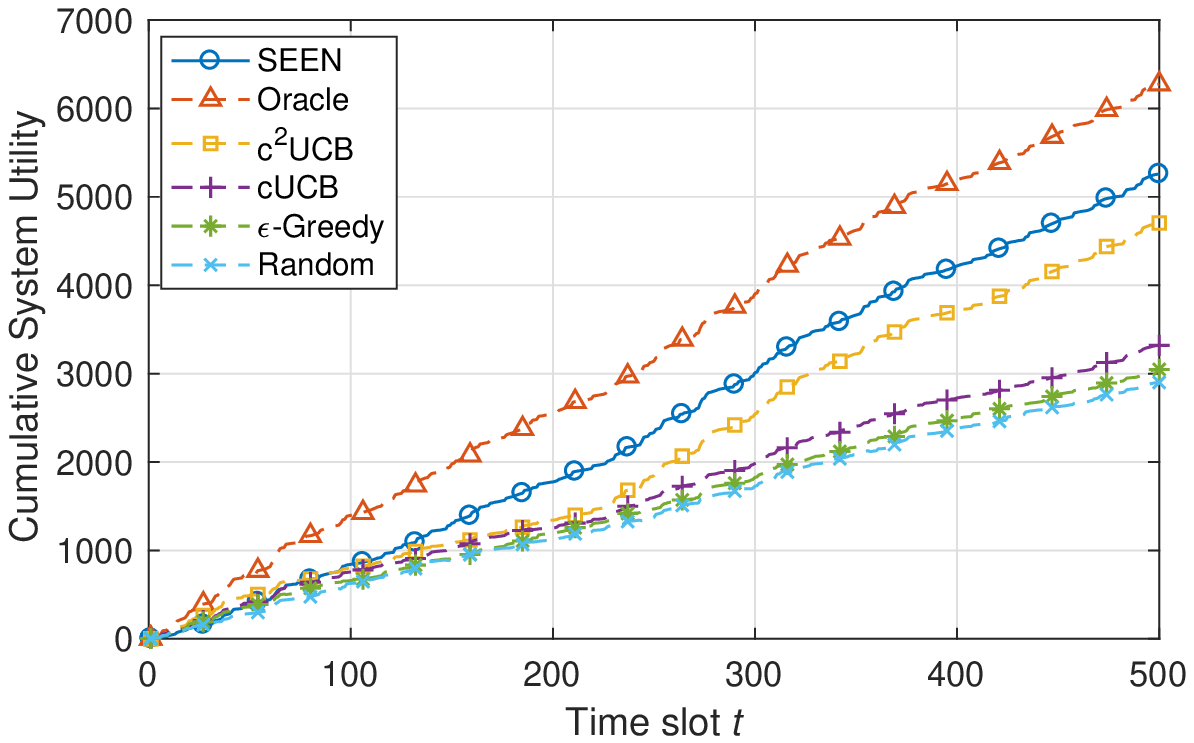}}
	\caption{Learning with different context spaces. (a) two-dimension (b) three-dimension (c) four-dimension}
	\label{fig:context_num}
	\vspace{-0.3 in}
\end{figure}

\subsection{Impact of ASP budget}
Fig. \ref{fig:varyb} depicts the cumulative system utility achieved by 6 schemes in 500 slots with different budgets. As expected, the system utility grows with the increase in ASP budget $b$ since more user demand can be processed at the network edge with more SBSs providing edge services. Moreover, we see that SEEN is able to achieve close-to-oracle performance at all levels of ASP budget. By contrast, the c$^2$UCB algorithm suffers an obvious performance degradation with $b\in [4,6]$. This is due to the fact that number of super-arms $N \choose b$ created by c$^2$UCB becomes very large given $N=10$ and $b\in[4,6]$. This forces the c$^2$UCB algorithm to enter exploration more frequently and leads to system utility loss.

\begin{figure*}[t]
	\begin{minipage}[t]{0.5\linewidth}
		\centering
		\includegraphics[width=0.95\linewidth]{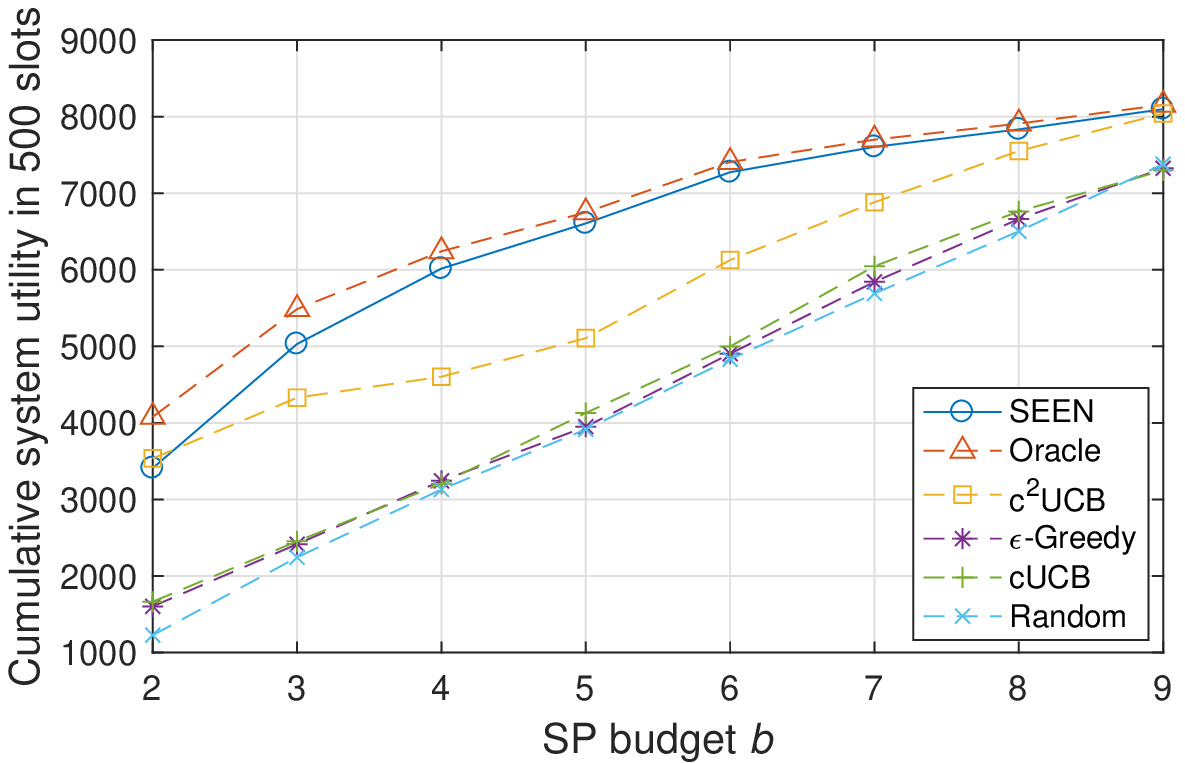}
		\vspace{-0.1 in}
		\caption{Impact of ASP budget.}
		\label{fig:varyb}
	\end{minipage}%
	\begin{minipage}[t]{0.5\linewidth}
		\centering
		\includegraphics[width=0.95\linewidth]{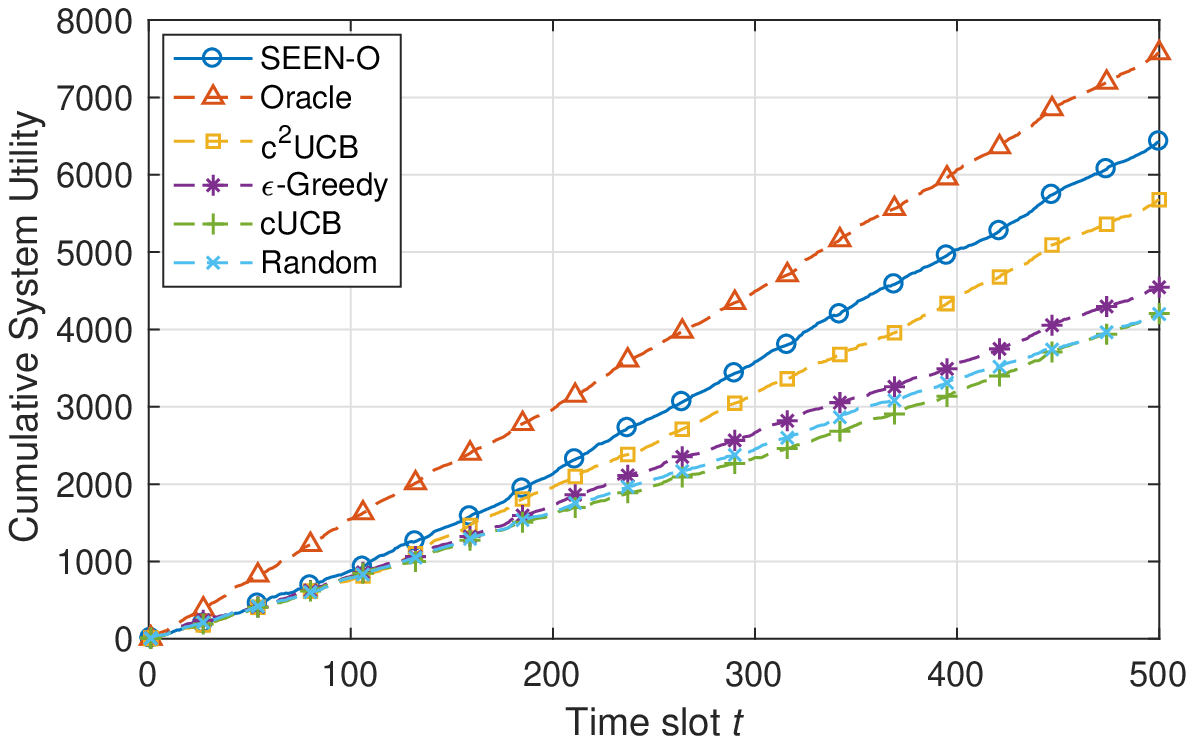}
		\vspace{-0.1 in}
		\caption{Comparison on cumulative utility (overlapped coverage).}
		\label{fig:cum_sys_utility_op}
	\end{minipage}%
	\vspace{-0.3 in}
\end{figure*}

\subsection{Edge Service Placement with Overlapping Coverage}
Fig. \ref{fig:cum_sys_utility_op} compares the performance achieved by SEEN-O and 5 other benchmarks when applied to the overlapping case. Similar to the non-overlapping case, we see that the context-aware schemes far outperform conventional MAB algorithms and SEEN-O achieves the highest cumulative system utility except for Oracle. However, it can be observed that SEEN-O incurs a larger regret compared to the non-overlapping case. This is because users in the overlapped area are observed by multiple SBSs and their contexts are duplicated when determining the under-explored SBSs. This increases the probability of being under-explored for SBSs and pushes SEEN-O to enter the exploration phase. Nevertheless, it does not mean that considering coverage overlapping leads to the performance degradation. SEEN-O actually achieves a higher cumulative system utility compared to that of SEEN achieved in the non-overlapping case.

\subsection{Impact of Overlapping Degree}
The overlapping degree of the edge network is defined as $S_{\text{co-cover}}/S_{\text{total}}$ where $S_{\text{co-cover}}$ is the service area co-covered by at least two SBSs and $S_{\text{total}}$ is the total service area. In the following, we show the impact of overlapping degree on the performance of SEEN. Fig. \ref{fig:overlap_deg} depicts the cumulative system utilities achieved by SEEN-O and Random in 500 time slots with different overlapping degrees. It also shows the cumulative system utility achieved by SEEN in the non-overlapping case for comparison. In general, we see that a larger overlapping degree results in higher system utilities for both SEEN-O and Random. This is because more users can access multiple SBSs for edge service given a larger overlapping degree, and therefore the ASP can further optimize the edge service placement decisions to accommodate more service demand at the Internet edge by exploiting the flexible association of users. By comparing SEEN-O and SEEN, we also see that considering the SBS coverage overlapping helps improve the system utility and the improvement grows with the increase in the overlapping degree.
\begin{figure}[htb]
	\centering
	\includegraphics[width=0.5\linewidth]{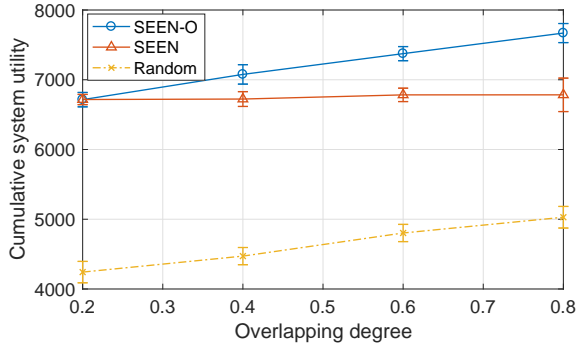}
	\caption{Impact of overlapping degree}
	\label{fig:overlap_deg}
	\vspace{-0.3 in}
\end{figure}

\section{Conclusion}\label{sec:conclusion}
In this paper, we investigated the edge service placement problem of an ASP in radio access networks integrated with shared edge computing platforms. To cope with the unknown and fluctuating service demand among changing user populations, we formulated a novel combinatorial contextual bandit learning problem and proposed an efficient learning algorithm to make optimal spatial-temporal dynamic edge service placement decisions. The proposed algorithm is practical, easy to implement and scalable to large networks while achieving provably asymptotically optimal performance. However, there are still efforts need to be done to improve the existing CC-MAB framework. First, we currently use a simple static partition of context space. Considering dynamic partition may further help improve the algorithm performance by generating more appropriate hypercubes. Second, our paper only provides a regret upper bound for SEEN. A meaningful complementary is to analyze the regret lower bound. Besides the investigated edge service placement problem, CC-MAB can also be applied to many other sequential decision making problems under uncertainty that involve multiple-play given a limited budget and context.

\bibliographystyle{IEEEtran}
\bibliography{refs}

\newpage
\appendices
\section{Proof of Theorem \ref{theo:regret_bound}}\label{proof:theorem_bound_R(T)}
The regret bound of SEEN is derived based on the natural assumption that the expected demands of users are similar if they have similar context as captured by the the H\"{o}lder condition. The H\"{o}lder condition allows us to derive a regret bound, which shows that the regret of SEEN is sublinear in the time horizon $T$, i.e. $R(T)=O(T^{\gamma})$ with $\gamma<1$.

For each SBS $n\in\mathcal{N}$ and each hypercube $p\in\mathcal{P}_{n,T}, \forall n$, we define $\bar{\mu}(p)=\sup_{x \in p}\mu(x)$ and $\ubar{\mu}(p)=\inf_{x \in p}\mu(x)$ be the best and worst expected demand over all contexts $x$ from hypercube $p$ respectively. In some steps of the proofs, we have to compare the demands at different positions in a hypercube. As a point of reference, we define the context at the (geometrical) center of a hypercube $p$ as $x^*(p)$. Also, we define the top-$b$ SBSs for hypercubes in $\P^t$ as following $b$ SBSs $\mathcal{S}^{*t}(\P^t)=\{n^*_1(\P^t),\dots, n^*_b(\P^t)\}$ which satisfy
\begin{align*}
&n^*_1(\P^t) \in \argmax_{n\in\mathcal{N}}\sum_{m \in \mathcal{M}^t_n}\tilde{u}_{n,m}\mu(x^*(p^t_{n,m}))\\
&n^*_2(\P^t) \in \argmax_{n\in\mathcal{N}\backslash\{n^*_1(\P^t)\}}\sum_{m \in \mathcal{M}^t_n}\tilde{u}_{n,m}\mu(x^*(p^t_{n,m}))\\
&\qquad\vdots\\
&n^*_b(\P^t) \in \argmax_{n\in\mathcal{N}\backslash\{n^*_1(\P^t),\dots,n^*_{b-1}(\P^t)\}}\sum_{m \in \mathcal{M}^t_n}\tilde{u}_{n,m}\mu(x^*(p^t_{n,m}))
\end{align*}
$\mathcal{S}^{*t}(\P^t)$ can be used to identify subsets of SBSs which are bad choices to rent when its users' contexts are from hypercubes $\p^t_n=(p^t_{n,m})_{m\in\mathcal{M}^t_n}$. Let
\begin{align}
&\mathcal{L}^t(\P^t)=\big\{ G=\{n_1,\dots ,n_b\} \subseteq \mathcal{N}, |G|=b: \nonumber \\
& \sum_{n\in S^{*,t}} \sum_{m \in \mathcal{M}^t_n} \tilde{u}_{n,m} \ubar{\mu}(p^t_{n,m}) - \sum_{n\in G} \sum_{m \in \mathcal{M}^t_n} \tilde{u}_{n,m} \bar{\mu}(p^t_{n,m}) \geq At^\theta \big\}
\end{align}
be the \emph{set of suboptimal subsets of SBSs} for the users' context $\P^t$, where $A>0$ and $\theta<0$ are parameters used only in the regret analysis. We call a subset $G$ of SBSs in $\mathcal{L}^t(\P^t)$ \emph{suboptimal} for $\P^t$, since the sum of the worst expected demands for $\mathcal{S}^{*t}(\P^t)$ is at least an amount $At^\theta$ higher than the sum of the best expected demands for subset $G$. We call subsets in $\mathcal{N}_b\backslash\mathcal{L}^t(\P^t)$ \emph{near-optimal} for $\P^t$. Here, $\mathcal{N}_b$ denotes the set of all $b$-element subsets of set $\mathcal{N}$. Then the regret $R(T)$ can be divided into the following three summands
\begin{align}
R(T)=\mathbb{E}[R_e(T)]+\mathbb{E}[R_s(T)]+\mathbb{E}[R_n(T)]
\end{align}
where the term $\mathbb{E}[R_e(T)]$ is the regret due to exploration phases and the term $\mathbb{E}[R_s(T)]$ and $\mathbb{E}[R_n(T)]$ are regrets in exploitation phases: the term $\mathbb{E}[R_s(T)]$ is the regret due to suboptimal choices, i.e., when subsets of SBSs from $\mathcal{L}^t(\P^t)$ are rented; the term $\mathbb{E}[R_n(T)]$ is the regret due to near-optimal choices, i.e., when subsets of SBSs from $\mathcal{N}_b\backslash\mathcal{L}^t(\P^t)$ are rented. Later, we will show that each of the three summands is bounded.

We first give the bound of $\mathbb{E}[R_e(T)]$ as shown in Lemma \ref{lemma:bound_R_e}.

\begin{lemma}[Bound for  $\mathbb{E}(R_e(T))$] \label{lemma:bound_R_e}
	Let $K_n(t)=t^{z_n}\log(t)$ and $h_{n,T}=\lceil T^{\gamma_n} \rceil$, where $0<z_n<1$ and $0<\gamma_n<\frac{1}{D_n}$. If SEEN is run with these parameters, the regret $\mathbb{E}[R_e(T)]$ is bounded by
	\begin{align}
	\mathbb{E}[R_e(T)]\leq b\tilde{u}^{\max}M^{\max}d^{\max}\sum_{n\in\mathcal{N}} 2^{D_n}\left(\log(T)T^{z_n+\gamma_nD_n} + T^{\gamma_nD_n}\right)
	\end{align}
\end{lemma}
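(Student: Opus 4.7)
The plan is to split $\mathbb{E}[R_e(T)]$ into a per-round regret factor and a bound on the total number of exploration rounds $T_e$. In any single slot, the absolute gap between the oracle's utility and the algorithm's utility is at most $b\tilde{u}^{\max} M^{\max} d^{\max}$ by the definition of the utility in \eqref{eq:utility}: at most $b$ SBSs are rented, each covers at most $M^{\max}$ users, and each user contributes utility at most $\tilde{u}^{\max} d^{\max}$. Thus it suffices to prove the deterministic bound $T_e \leq \sum_n |\mathcal{P}_{n,T}|\bigl(K_n(T)+1\bigr)$.

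The heart of the argument is a charging scheme that maps each exploration round to at least one ``below-threshold'' counter increment. Whenever the algorithm enters exploration, $\mathcal{N}^{\text{ue},t}$ is non-empty, and by lines 8--9 of Algorithm \ref{alg:SEEN} at least one SBS $n^* \in \mathcal{N}^{\text{ue},t}$ is actually rented in that slot (in the $q \geq b$ branch all $b$ rented SBSs come from $\mathcal{N}^{\text{ue},t}$, while in the $q < b$ branch all $q$ under-explored SBSs are included). The defining property \eqref{ue_SBS} then guarantees that $n^*$ covers some user $m^*$ whose context lies in a hypercube $p^* := p^t_{n^*,m^*}$ satisfying $C^t_{n^*}(p^*) < K_{n^*}(t) \leq K_{n^*}(T)$. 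Because $n^*$ is rented, Line 16 increments $C_{n^*}(p^*)$, yielding a below-threshold update chargeable to this exploration round.

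To bound the number of below-threshold updates any single $(n,p)$ pair can absorb, I exploit that $C_n(p)$ is integer-valued and increases by at least one per update: after $k$ such updates the counter is at least $k$, so the pre-update condition $C_n(p) < K_n(T)$ at the $k$th update forces $k-1 < K_n(T)$, giving at most $K_n(T)+1$ such updates per $(n,p)$. Summing across the $|\mathcal{P}_{n,T}|$ hypercubes and all SBSs $n$ gives the desired bound on $T_e$. Substituting $K_n(T) = T^{z_n}\log T$ and using $|\mathcal{P}_{n,T}| = \lceil T^{\gamma_n}\rceil^{D_n} \leq (T^{\gamma_n}+1)^{D_n} \leq 2^{D_n}T^{\gamma_n D_n}$ (binomial estimate, valid for $T \geq 1$ since every intermediate power satisfies $T^{\gamma_n k}\leq T^{\gamma_n D_n}$), then multiplying by the per-round regret $b\tilde{u}^{\max} M^{\max} d^{\max}$, produces exactly the two terms $\log(T) T^{z_n+\gamma_n D_n}$ and $T^{\gamma_n D_n}$ summed in the lemma. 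The main obstacle is the charging step: one must verify in both branches of the exploration logic that the chosen SBS genuinely lies in $\mathcal{N}^{\text{ue},t}$ and that its triggering user's hypercube counter is strictly below $K_n(T)$ before the update, so that the counter-budget argument applies without double-counting.
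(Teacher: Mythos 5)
Your proof is correct and follows essentially the same route as the paper: both bound the per-slot loss by $b\tilde{u}^{\max}M^{\max}d^{\max}$, count exploration slots via the threshold $K_n(T)$ on the integer-valued hypercube counters, and bound the number of hypercubes by $2^{D_n}T^{\gamma_n D_n}$. The only (cosmetic) difference is bookkeeping: the paper sums, over each SBS $n$, the at most $(h_{n,T})^{D_n}\lceil T^{z_n}\log T\rceil$ phases in which $n$ is rented while under-explored, whereas you charge each exploration round injectively to a single below-threshold counter increment; both yield the identical final bound.
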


\begin{proof}[Proof of Lemma \ref{lemma:bound_R_e}]
	Let $t$ be an exploration phase, then by the definition of SEEN, the set of under-explored SBSs $\mathcal{N}^{\text{ue},t}$ is non-empty in exploration, i.e., there exist a SBS $n$ and a hypercube $p^t_{n,m}, m\in\mathcal{M}^t_n$ with $C^t_{n}(p^t_{n,m})\leq K_n(t)=t^{z_n}\log(t)$. Clearly, there can be at most $\lceil T^{z_n} \log(T) \rceil$ exploration phases in which SBS $n$ is rented due to its under-exploration. Since there are $(h_{n,T})^{D_n}$ hypercubes in the partition, there can be at most $(h_{n,T})^{D_n}\lceil T^{z_n} \log(T) \rceil$ exploration phases in which SBS $n$ is rent due to its under-exploration. In each of these exploration phase, the maximum loss in demand due to wrong selection of a user in SBS $n$ is bounded by $\Delta^{\max}:= \max_{x\in\mathcal{X}_n,x^\prime\in \mathcal{X}_{n^\prime}}|\mu(x)-\mu(x^\prime)|$. Notice the random demand $\mu(x)$ for any $x\in \mathcal{X}_n, n\in\mathcal{N}$, is bounded in $[0, d^{\max}]$, it holds that $\Delta^{\max} \leq d^{\max}$. Let $\tilde{u}^{\max}$ be the maximum achievable delay improvement for any SBS $n$ by completing a unit workload (a task) for user $m\in\mathcal{M}^t_n$. Since the maximum number of users can be served by an SBS per time slot is $M^{\max}$, the maximum loss for wrong selection of SBS $n$ is bounded by $\tilde{u}^{\max}M^{\max}d^{\max}$. Additionally, we have to take into account the loss due to exploitations in the case that the size of under-explored SBSs is smaller than $b$. In each of the exploration phases in which SBS $n$ is selected, if the size of $\mathcal{N}^{\text{ue},t}$ is smaller than $b$, the maximum additional loss is $(b-1)\tilde{u}^{\max}M^{\max}d^{\max}$. Therefore, in each of exploration phase in which $n$ is selected, the overall maximum loss due to wrong selection of SBS $n$ is $b\tilde{u}^{\max}M^{\max}d^{\max}$. Summing over all $n\in\mathcal{N}$ yields:
	\begin{align}
	\mathbb{E}[R_e(T)] & \leq b\tilde{u}^{\max}M^{\max}d^{\max}\sum_{n\in\mathcal{N}} (h_{n,T})^{D_n}\lceil T^{z_n}\log(T)\rceil\\
	& = b\tilde{u}^{\max}M^{\max}d^{\max}\sum_{n\in\mathcal{N}} (T^{\gamma_n})^{D_n}\lceil T^{z_n}\log(T)\rceil
	\end{align}
	Using $\lceil T^{\gamma_n}\rceil^{D_n} \leq (2T^{\gamma_n})^{D_n} =2^{D_n}T^{\gamma_nD_n}$, it holds
	\begin{align}
	\mathbb{E}[R_e(T)]\leq b\tilde{u}^{\max}M^{\max}d^{\max}\sum_{n\in\mathcal{N}} 2^{D_n}\left(\log(T)T^{z_n+\gamma_nD_n} + T^{\gamma_nD_n}\right)
	\end{align}
\end{proof}

Next, we give a bound for $\mathbb{E}[R_s(T)]$. This bound also depends on the choice of two parameters $z_n$ and $\gamma_n$ for each SBS. Additionally, a condition on these parameters has to be satisfied.

\begin{lemma}[Bound for  $\mathbb{E}(R_s(T))$] \label{lemma:bound_R_s}
	Let $K_n(t)=t^{z_n}\log(t)$ and $h_{n,T}=\lceil T^{\gamma_n} \rceil$, where $0<z_n<1$ and $0<\gamma_n<\frac{1}{D_n}$. If SEEN is run with these parameters, Assumption \ref{holder} holds true and the additional condition $2H(t)+\tilde{u}^{\max}M^{\max}\left(\sum_{n \in G}L_nD_n^{\frac{\alpha_n}{2}}h^{-\alpha_n}_{n,T}+\sum_{n \in \mathcal{S}^{*t}(\P^t)}L_nD_n^{\frac{\alpha_n}{2}}h^{-\alpha_n}_{n,T}\right) \leq At^\theta$  is satisfied for all $1\leq t\leq T$ where $H(t):=b\tilde{u}^{\max}M^{\max}d^{\max}t^{-z^{\min}/2}, z^{\min} = \min_n z_n$. Then the regret $\mathbb{E}[R_s(T)]$ is bounded by
	\begin{align}
	\mathbb{E}[R_s(T)] \leq  b^2 \tilde{u}^{\max} (M^{\max})^2 d^{\max} {N \choose b} \frac{\pi^2}{3}
	\end{align}
\end{lemma}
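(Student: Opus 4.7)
The plan is to bound, on each exploitation round $t$, the probability that SEEN selects some $G\in\mathcal{L}^t(\P^t)$, and then sum over the at most ${N\choose b}$ candidate suboptimal subsets and over $t$. Since each wrong selection contributes at most $b\tilde{u}^{\max}M^{\max}d^{\max}$ to the instantaneous regret, the remaining task is to establish the per-subset bound $\mathbb{P}(\mathcal{S}^t=G)\leq 2bM^{\max}/t^2$ for each fixed $G\in\mathcal{L}^t(\P^t)$; a union bound over $G\in\mathcal{N}_b$ and the tail sum $\sum_{t\geq 1}t^{-2}=\pi^2/6$ then deliver the claimed $b^2\tilde{u}^{\max}(M^{\max})^2d^{\max}{N\choose b}\pi^2/3$.

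The key reduction translates the selection event into a concentration event on the hypercube demand estimators. Write $\hat{d}(p)=\tilde{\mu}(p)+\nu(p)$, where $\tilde{\mu}(p)$ is the sample-average of the underlying contextual means of the $C^t_n(p)$ past observations in $p$ (and therefore lies in $[\ubar{\mu}(p),\bar{\mu}(p)]$), and $\nu(p)$ is the zero-mean noise. The greedy exploitation rule forces
\begin{equation*}
\sum_{n\in G}\sum_m\tilde{u}_{n,m}\hat{d}(p^t_{n,m})\;\geq\;\sum_{n\in\mathcal{S}^{*t}(\P^t)}\sum_m\tilde{u}_{n,m}\hat{d}(p^t_{n,m}).
\end{equation*}
Upper-bounding $\tilde{\mu}(p)\leq\bar{\mu}(p)$ on $G$ and lower-bounding $\tilde{\mu}(p)\geq\ubar{\mu}(p)$ on $\mathcal{S}^{*t}(\P^t)$, invoking $G\in\mathcal{L}^t(\P^t)$ to substitute in the gap $At^\theta$, and using the H\"older diameter bound $\bar{\mu}(p)-\ubar{\mu}(p)\leq L_n D_n^{\alpha_n/2}h_{n,T}^{-\alpha_n}$ (the hypercube has diameter $\sqrt{D_n}/h_{n,T}$), the lemma's hypothesis absorbs the deterministic bias and leaves the noise inequality
\begin{equation*}
\sum_{n\in G}\sum_{m}\tilde{u}_{n,m}\nu(p^t_{n,m})-\sum_{n\in\mathcal{S}^{*t}(\P^t)}\sum_{m}\tilde{u}_{n,m}\nu(p^t_{n,m})\;\geq\;2H(t).
\end{equation*}
Hence either the $G$-sum is at least $H(t)$ or the $\mathcal{S}^{*t}$-sum is at most $-H(t)$; by pigeonhole at least one $\nu(p^t_{n,m})$ with $n\in G\cup\mathcal{S}^{*t}(\P^t)$ must then have a one-sided deviation of magnitude at least $H(t)/(bM^{\max}\tilde{u}^{\max})=d^{\max}t^{-z^{\min}/2}$.

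The concentration step is then routine. In exploitation, $\hat{d}(p)$ is an empirical mean of at least $K_n(t)=t^{z_n}\log t$ independent $[0,d^{\max}]$-valued samples, so the one-sided Hoeffding inequality gives
\begin{equation*}
\mathbb{P}\bigl(\nu(p)\geq d^{\max}t^{-z^{\min}/2}\bigr)\leq\exp\bigl(-2K_n(t)\,t^{-z^{\min}}\bigr)=t^{-2t^{z_n-z^{\min}}}\leq t^{-2},
\end{equation*}
where the last step uses $z_n\geq z^{\min}$; the lower tail is symmetric. A union bound over the $bM^{\max}$ upper-tail events indexed by $G$ and the $bM^{\max}$ lower-tail events indexed by $\mathcal{S}^{*t}(\P^t)$ yields $\mathbb{P}(\mathcal{S}^t=G)\leq 2bM^{\max}/t^2$, and assembling the pieces as in paragraph one completes the proof.

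The main obstacle is the second paragraph: cleanly isolating the stochastic estimation noise $\nu$ from the H\"older-induced deterministic bias, so that the lemma's precisely calibrated hypothesis exactly swallows the bias and leaves a clean $2H(t)$ slack for the noise. Once that reduction is in place, Hoeffding, the combinatorial union bound, and the $\sum t^{-2}$ tail sum deliver the constant $\pi^2/3$ mechanically.
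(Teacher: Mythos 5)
Your proof is correct and its overall architecture coincides with the paper's: the per-round loss is capped by $b\tilde{u}^{\max}M^{\max}d^{\max}$, each fixed $G\in\mathcal{L}^t(\P^t)$ is shown to be selected in exploitation with probability at most $2bM^{\max}t^{-2}$ via Hoeffding with the sample floor $K_n(t)=t^{z_n}\log t$, and the union bound over at most ${N \choose b}$ suboptimal subsets together with $\sum_t t^{-2}=\pi^2/6$ yields the stated constant. Where you genuinely diverge is in how the discretization bias is handled. The paper splits the selection event into $E_1\cup E_2\cup E_3$, where $E_1,E_2$ are the two concentration failures of size $H(t)$ and $E_3$ is shown to have probability zero; that last step routes through $d^{\text{best}}(p)$ and $d^{\text{worst}}(p)$ and therefore needs the H\"{o}lder condition to control $d^{\text{best}}(p)-\hat{d}(p)\leq L_nD_n^{\alpha_n/2}h_{n,T}^{-\alpha_n}$, which is why those terms sit in the lemma's hypothesis. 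You instead write $\hat{d}(p)=\tilde{\mu}(p)+\nu(p)$ with the conditional sample mean $\tilde{\mu}(p)$ automatically sandwiched in $[\ubar{\mu}(p),\bar{\mu}(p)]$, so the selection event combined with $G\in\mathcal{L}^t(\P^t)$ immediately forces the noise difference to exceed $At^\theta\geq 2H(t)$. This is logically the contrapositive of the paper's $\mathrm{Prob}\{E_3\}=0$ argument, but it is cleaner and, notably, never invokes Assumption 1: in your route the H\"{o}lder terms in the hypothesis are unused slack and only $At^\theta\geq 2H(t)$ matters (they are still needed for the parameter calibration in Theorem 1 and for $\mathbb{E}[R_n(T)]$, which is presumably why the hypothesis is stated in that form). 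Both routes share the same implicit assumptions, namely $\tilde{u}_{n,m}\in(0,\tilde{u}^{\max}]$ in the pigeonhole step and the applicability of Hoeffding to sample sets of algorithm-dependent random size, so your write-up is at the paper's level of rigor.
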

\begin{proof}[Proof of Lemma \ref{lemma:bound_R_s}]
	For $1 \leq t \leq T$, let $W(t)=\{\mathcal{N}^{\text{ue},t} =\emptyset\}$ be the even that slot $t$ is an exploitation phase. By the definition of $\mathcal{N}^{\text{ue},t}$, in this exploitation, it holds that $C^t_n(p^t_{n,m})>K_n(t)=t^{z_n}\log(t)$ for all $n\in\mathcal{N}$ and all $m\in\mathcal{M}^t_n$. Let $V_G(t)$ be the event that subset $G$ is rented at time slot $t$. Then, it holds that
	\begin{align}
	R_s(T)= & \sum_{t=1}^{T}\sum_{G\in\mathcal{L}^t(\P^t)} I_{\{V_G(t), W(t)\}} \times \nonumber\\
	& \left(\sum_{n\in S^{*t}(\X^t)}\sum_{m \in \mathcal{M}^t_n} \tilde{u}_{n,m}d(x^t_{n,m})-\sum_{n\in G}\sum_{m \in \mathcal{M}^t_n} \tilde{u}_{n,m}d(x^t_{n,m})\right)
	\end{align}
	where in each time step, for the set $\mathcal{S}^{*t}(\X)$, the loss due to renting a suboptimal subset $G\in\mathcal{L}^t(\P^t)$ is considered. In each of the summands, the loss is given by comparing the demand for SBSs in $\mathcal{S}^{*t}(\X)$ with the demand for SBSs in the rented suboptimal set. Since the maximum loss per SBS is bounded by $\tilde{u}^{\max}M^{\max}d^{\max}$, we have
	\begin{align}
	R_s(T)\leq b\tilde{u}^{\max}M^{\max}d^{\max}\sum_{t=1}^{T}\sum_{G\in\mathcal{L}^t(\P^t)} I_{\{V_G(t), W(t)\}}
	\end{align}
	and taking the exception, the regret is hence bounded by
	\begin{align}
	\mathbb{E}[R_s(T)] &\leq b\tilde{u}^{\max}M^{\max}d^{\max}\sum_{t=1}^{T}\sum_{G\in\mathcal{L}^t(\P^t)} \mathbb{E}\left[I_{\{V_G(t), W(t)\}}\right] \nonumber\\
	& = b\tilde{u}^{\max}M^{\max}d^{\max}\sum_{t=1}^{T}\sum_{G\in\mathcal{L}^t(\P^t)} \text{Prob}\left\{V_G(t), W(t)\right\}
	\end{align}
	
	In the event of $V_G(t)$, by the construction of the algorithm, this means especially that the estimated utility achieved by SBSs in $G$ is at least as high as the sum of estimated utility of SBSs in $\mathcal{S}^{*t}(\P)$, i.e., $\sum_{n\in G}\sum_{m\in\mathcal{M}^t_n} \tilde{u}_{n,m}\hat{d}(p^t_{n,m}) \geq \sum_{n\in \mathcal{S}^{*t}(\P^t)}\sum_{m\in\mathcal{M}^t_n} \tilde{u}_{n,m}\hat{d}(p^t_{n,m})$. Thus, we have
	\begin{align} \label{prob_subopt}
	&\text{Prob}\left\{V_G(t),W(t)\right\} \nonumber \\
	&\leq \text{Prob}\left\{\sum_{n\in G}\sum_{m\in\mathcal{M}^t_n} \tilde{u}_{n,m}\hat{d}(p^t_{n,m}) \geq \sum_{n\in \mathcal{S}^{*t}(\P^t)}\sum_{m\in\mathcal{M}^t_n}\tilde{u}_{n,m}\hat{d}(p^t_{n,m})\right\}
	\end{align}
	
	The event in the right-hand side of \eqref{prob_subopt} implies at lease one of the three following events for any $H(t)>0$:
	\begin{align*}
	E_1= \left\{\sum_{n\in G}\sum_{m\in\mathcal{M}^t_n} \tilde{u}_{n,m}\hat{d}(p^t_{n,m}) \geq \sum_{n\in G}\sum_{m\in\mathcal{M}^t_n} \tilde{u}_{n,m}\bar{\mu}(p^t_{n,m})+H(t), W(t)\right\}
	\end{align*}
	\begin{align*}
	E_2= \left\{\sum_{n\in \mathcal{S}^{*t}(\P^t)}\sum_{m\in\mathcal{M}^t_n} \tilde{u}_{n,m}\hat{d}(p^t_{n,m}) \leq \sum_{n\in \mathcal{S}^{*t}(\P^t)}\sum_{m\in\mathcal{M}^t_n} \tilde{u}_{n,m}\ubar{\mu}(p^t_{n,m})-H(t), W(t)\right\}
	\end{align*}
	\begin{align*}
	E_3 = & \left\{\sum_{n\in G}\sum_{m\in\mathcal{M}^t_n} \tilde{u}_{n,m}\hat{d}(p^t_{n,m}) \geq \sum_{n\in \mathcal{S}^{*t}(\P^t)}\sum_{m\in\mathcal{M}^t_n} \tilde{u}_{n,m}\hat{d}(p^t_{n,m}),\right. \\
	& \quad \sum_{n\in G}\sum_{m\in\mathcal{M}^t_n} \tilde{u}_{n,m}\hat{d}(p^t_{n,m}) < \sum_{n\in G}\sum_{m\in\mathcal{M}^t_n} \tilde{u}_{n,m}\bar{\mu}(p^t_{n,m})+H(t),\\
	&  \left.\sum_{n\in \mathcal{S}^{*t}(\P^t)}\sum_{m\in\mathcal{M}^t_n} \tilde{u}_{n,m}\hat{d}(p^t_{n,m}) > \sum_{n\in \mathcal{S}^{*t}(\P^t)}\sum_{m\in\mathcal{M}^t_n} \tilde{u}_{n,m}\ubar{\mu}(p^t_{n,m})-H(t), W(t)\right\}.
	\end{align*}
	
	Hence, we have for the original event in \eqref{prob_subopt}
	\begin{align}\label{ori_E123}
	\left\{\sum_{n\in G}\sum_{m\in\mathcal{M}^t_n} \tilde{u}_{n,m}\hat{d}(p^t_{n,m}) \geq \sum_{n\in \mathcal{S}^{*t}(\P^t)}\sum_{m\in\mathcal{M}^t_n}\tilde{u}_{n,m}\hat{d}(p^t_{n,m})\right\}\subseteq E_1 \cup E_2 \cup E_3
	\end{align}
	
	The probability of the three event $E_1$, $E_2$, and $E_3$ will be bounded separately. We start with $E_1$, recall that the best expected demand from SBS $n$ in set $p\in\mathcal{P}_{n,T}$ is $\bar{\mu}(p)=\sup_{x \in p} \bar{\mu}(x)$. Therefore, the expected modified utilities of SBSs in $G$ is bounded by
	\begin{align}
	&\sum_{n\in G}\sum_{m \in \mathcal{M}^t_n}\tilde{u}_{n,m} \mathbb{E}\left[\hat{d}(p^t_{n,m})\right]\\
	= & \sum_{n\in G}\sum_{m \in \mathcal{M}^t_n}\tilde{u}_{n,m}\mathbb{E}\left[\frac{1}{|\mathcal{E}^t_n(p^t_{n,m})|} \sum_{(\tau,k): x^\tau_{n,k}\in p^\tau_{n,m}, n\in\mathcal{S}^t} d(x^\tau_{n,k})\right]\\
	= & \sum_{n\in G}\sum_{m \in \mathcal{M}^t_n}\tilde{u}_{n,m}\frac{1}{|\mathcal{E}^t_n(p^t_{n,m})|} \underbrace{\sum_{(\tau,k): x^\tau_{n,k}\in p^\tau_{n,m}, n\in\mathcal{S}^t}}_{|\mathcal{E}^t_n(p^t_{n,m})| \text{summands}} \underbrace{\mu(x^\tau_{n,k})}_{\leq \bar{\mu}(p^t_{n,m})}\\
	\leq & \sum_{n\in G}\sum_{m \in \mathcal{M}^t_n}\tilde{u}_{n,m}\bar{\mu}(p^t_{n,m})
	\end{align}
	
	This implies
	\begin{align}
	&\text{Prob}\{E_1\}\nonumber\\
	&=\text{Prob}\left\{\sum_{n\in G}\sum_{m\in\mathcal{M}^t_n} \tilde{u}_{n,m}\hat{d}(p^t_{n,m}) \geq \sum_{n\in G}\sum_{m\in\mathcal{M}^t_n} \tilde{u}_{n,m}\bar{\mu}(p^t_{n,m})+H(t), W(t)\right\}\nonumber\\
	&\leq \text{Prob}\left\{\sum_{n\in G}\sum_{m\in\mathcal{M}^t_n} \tilde{u}_{n,m}\hat{d}(p^t_{n,m}) \geq \sum_{n\in G}\sum_{m\in\mathcal{M}^t_n} \tilde{u}_{n,m}\mathbb{E}\left[\hat{d}(p^t_{n,m})\right]+H(t), W(t)\right\}\nonumber\\
	&\leq \sum_{n\in G}\sum_{m\in\mathcal{M}^t_n} \text{Prob}\left\{\tilde{u}_{n,m}\hat{d}(p^t_{n,m}) \geq  \tilde{u}_{n,m}\mathbb{E}\left[\hat{d}(p^t_{n,m})\right]+\dfrac{H(t)}{bM^{\max}}, W(t)\right\}\nonumber\\
	&\leq \sum_{n\in G}\sum_{m\in\mathcal{M}^t_n} \text{Prob}\left\{\hat{d}(p^t_{n,m}) \geq  \mathbb{E}\left[\hat{d}(p^t_{n,m})\right]+\dfrac{H(t)}{b\tilde{u}^{\max}M^{\max}}, W(t)\right\}
	\end{align}
	where the second last step follows the fact that if it would hold for all $n \in G$ and all $m\in \mathcal{M}^t_n$ that $\tilde{u}_{n,m}\hat{d}(p^t_{n,m}) <  \tilde{u}_{n,m}\mathbb{E}\left[\hat{d}(p^t_{n,m})\right]+\frac{H(t)}{bM^{\max}}$, then the line before could not hold true. Now, applying Chernoff-Hoeffding bound \cite{hoeffding1963probability} (note that for each SBS $n$, the estimated demand per user is bounded by $d^{\max}$) and then exploiting that event $W(t)$ implies that at least $t^{z_n}\log(t)$ samples were drawn from each SBS in $G$, we get
	\begin{align}\label{eq:prob_E_1}
	\text{Prob}\{E_1\} &\leq \sum_{n\in G}\sum_{m\in\mathcal{M}^t_n} \text{Prob}\left\{\hat{d}(p^t_{n,m}) -\mathbb{E}\left[\hat{d}(p^t_{n,m})\right] \geq  \dfrac{H(t)}{b\tilde{u}^{\max}M^{\max}}, W(t)\right\}\nonumber\\
	&\leq \sum_{n\in G}\sum_{m\in\mathcal{M}^t_n} \exp\left(\dfrac{-2|\mathcal{E}_n^t(p^t_{n,m})|H(t)^2}{b^2(M^{\max})^2(\tilde{u}^{\max})^2(d^{\max})^2}\right)\nonumber\\
	&\leq \sum_{n\in G}\sum_{m\in\mathcal{M}^t_n} \exp\left(\dfrac{-2H(t)^2t^{z_n}\log(t)}{b^2(M^{\max})^2(\tilde{u}^{\max})^2(d^{\max})^2}\right)
	\end{align}
	
	Analogously, it can be proven for event $E_2$, that
	\begin{align}\label{eq:prob_E_2}
	\text{Prob}\{E_2\} &=\text{Prob}\left\{\sum_{n\in \mathcal{S}^{*t}(\P^t)}\sum_{m\in\mathcal{M}^t_n} \tilde{u}_{n,m}\hat{d}(p^t_{n,m}) \right.\\
	&\qquad\qquad \left.\leq \sum_{n\in \mathcal{S}^{*t}(\P^t)}\sum_{m\in\mathcal{M}^t_n} \tilde{u}_{n,m}\ubar{\mu}(p^t_{n,m})-H(t), W(t)\right\}\\
	&\leq \sum_{n\in \mathcal{S}^{*t}(\P^t)}\sum_{m\in\mathcal{M}^t_n} \exp\left(\dfrac{-2H(t)^2t^{z_n}\log(t)}{b^2(M^{\max})^2(\tilde{u}^{\max})^2(d^{\max})^2}\right)
	\end{align}
	
	To bound the event $E_3$, we first make some additional definitions. First, we rewrite the estimate $\hat{d}(p), p\in \mathcal{P}_{n,T}$ as follows:
	\begin{align}
	\hat{d}(p)&=\dfrac{1}{|\mathcal{E}_n^t(p)|}\sum_{(\tau,k):x^\tau_{n,k}\in p, n\in \mathcal{S}^t} d(x^\tau_{n,k})\\
	&= \dfrac{1}{|\mathcal{E}_n^t(p)|}\sum_{(\tau,k):x^\tau_{n,k}\in p, n\in \mathcal{S}^t} \mu(x^\tau_{n,k})+\epsilon^\tau_{n,k}
	\end{align}
	
	where $\epsilon^\tau_{n,k}$ denotes the deviation from the expected demand of user with context $x^\tau_{n,k}$ in time slot $t$ covered by SBS $n$. Additionally, we define the best and worst context for a SBS $n\in\mathcal{N}$ in a set $p\in\mathcal{P}_{n,T}, \forall n$, i.e., $x^{\text{best}}(p):=\argmax_{x \in p} \mu(x)$ and $x^{\text{worst}}(p):=\argmin_{x \in p} \mu(x)$, respectively. Finally, we define the best and worst achievable demand for SBS $n$ in set $p$ as
	\begin{align}
	d^{\text{best}}(p)= \dfrac{1}{|\mathcal{E}_n^t(p)|}\sum_{(\tau,k):x^\tau_{n,k}\in p, n\in \mathcal{S}^t} \mu(x^{\text{best}}(p))+\epsilon^\tau_{n,k} \label{d_best}\\
	d^{\text{worst}}(p)= \dfrac{1}{|\mathcal{E}_n^t(p)|}\sum_{(\tau,k):x^\tau_{n,k}\in p, n\in \mathcal{S}^t} \mu(x^{\text{worst}}(p))+\epsilon^\tau_{n,k} \label{d_worst}
	\end{align}
	
	By H\"{o}lder condition from Assumption \ref{holder}, since $x^{\text{best}}(p) \in p$ and only contexts from hypercube $p$
	are used for calculating the estimated demand $\hat{d}(p)$, it can be shown that
	\begin{align}
	d^{\text{best}}(p)-\hat{d}(p)\leq L_nD_n^{\frac{\alpha_n}{2}}h^{-\alpha_n}_{n,T}
	\end{align}
	holds. Analogously, we have
	\begin{align}
	\hat{d}(p)-d^{\text{worst}}(p) \leq L_nD_n^{\frac{\alpha_n}{2}}h^{-\alpha_n}_{n,T}
	\end{align}
	
	Apply these results to the SBSs in $G$ and $\mathcal{S}^{*t}(\P^t)$ by summing over the SBSs, we have
	\begin{align}\label{eq:G_best_esti}
	\sum_{n \in G}\sum_{m\in\mathcal{M}^t_n} \left(\tilde{u}_{n,m}d^{\text{best}}(p^t_{n,m})-\tilde{u}_{n,m}\hat{d}(p^t_{n,m})\right) \leq \tilde{u}^{\max}M^{\max}\sum_{n \in G}L_nD_n^{\frac{\alpha_n}{2}}h^{-\alpha_n}_{n,T}
	\end{align}
	\begin{align}\label{eq:S*_best_esti}
	\sum_{n \in \mathcal{S}^{*t}(\P^t)}\sum_{m\in\mathcal{M}^t_n} \left(\tilde{u}_{n,m}\hat{d}(p^t_{n,m})-\tilde{u}_{n,m}d^{\text{worst}}(p^t_{n,m})\right) \leq \tilde{u}^{\max}M^{\max}\sum_{n \in \mathcal{S}^{*t}(\P^t)}L_nD_n^{\frac{\alpha_n}{2}}h^{-\alpha_n}_{n,T}
	\end{align}
	Now the three components of event $E_3$ are considered separately. By the definition of $d^{\text{best},t}_{n}(p)$ and $d^{\text{worst},t}_{n}(p)$ in \eqref{d_best} and \eqref{d_worst}. The first component of $E_3$, it holds that
	\begin{align}\label{E_3_1}
	&\left\{\sum_{n\in G}\sum_{m\in\mathcal{M}^t_n} \tilde{u}_{n,m}\hat{d}(p^t_{n,m}) \geq \sum_{n\in \mathcal{S}^{*t}(\P^t)}\sum_{m\in\mathcal{M}^t_n} \tilde{u}_{n,m}\hat{d}(p^t_{n,m})\right\}\\
	\subseteq & \left\{\sum_{n\in G}\sum_{m\in\mathcal{M}^t_n} \tilde{u}_{n,m}d^{\text{best}}(p^t_{n,m}) \geq \sum_{n\in \mathcal{S}^{*t}(\P^t)}\sum_{m\in\mathcal{M}^t_n} \tilde{u}_{n,m}d^{\text{worst}}(p^t_{n,m})\right\}
	\end{align}
	
	For the second component, using \eqref{eq:G_best_esti}, we have
	\begin{align}\label{E_3_2}
	&\left\{\sum_{n\in G}\sum_{m\in\mathcal{M}^t_n} \tilde{u}_{n,m}\hat{d}(p^t_{n,m}) < \sum_{n\in G}\sum_{m\in\mathcal{M}^t_n} \tilde{u}_{n,m}\bar{\mu}(p^t_{n,m})+H(t)\right\} \nonumber\\
	\subseteq & \left\{\sum_{n\in G}\sum_{m\in\mathcal{M}^t_n} \tilde{u}_{n,m}d^{\text{best}}(p^t_{n,m})- \tilde{u}^{\max}M^{\max}\sum_{n \in G}L_nD_n^{\frac{\alpha_n}{2}}h^{-\alpha_n}_{n,T} < \sum_{n\in G}\sum_{m\in\mathcal{M}^t_n} \tilde{u}_{n,m}\bar{\mu}(p^t_{n,m})+H(t)\right\} \nonumber\\
	= & \left\{\sum_{n\in G}\sum_{m\in\mathcal{M}^t_n} \tilde{u}_{n,m}d^{\text{best}}(p^t_{n,m}) < \sum_{n\in G}\sum_{m\in\mathcal{M}^t_n} \tilde{u}_{n,m}\bar{\mu}(p^t_{n,m}) + \tilde{u}^{\max}M^{\max}\sum_{n \in G}L_nD_n^{\frac{\alpha_n}{2}}h^{-\alpha_n}_{n,T} +H(t)\right\}
	\end{align}
	
	For the third component, using \eqref{eq:S*_best_esti}, we have
	\begin{align}\label{E_3_3}
	&\left\{\sum_{n\in \mathcal{S}^{*t}(\P^t)}\sum_{m\in\mathcal{M}^t_n} \tilde{u}_{n,m}\hat{d}(p^t_{n,m}) > \sum_{n\in \mathcal{S}^{*t}(\P^t)}\sum_{m\in\mathcal{M}^t_n} \tilde{u}_{n,m}\ubar{\mu}(p^t_{n,m})-H(t)\right\} \nonumber\\
	\subseteq & \left\{\sum_{n\in \mathcal{S}^{*t}(\P^t)}\sum_{m\in\mathcal{M}^t_n} \tilde{u}_{n,m} d^{\text{worst}}(p^t_{n,m})+\tilde{u}^{\max}M^{\max}\sum_{n \in \mathcal{S}^{*t}(\P^t)}L_nD_n^{\frac{\alpha_n}{2}}h^{-\alpha_n}_{n,T}\right. \nonumber \\& \qquad\qquad\qquad\qquad\qquad\qquad \left.>\sum_{n\in \mathcal{S}^{*t}(\P^t)}\sum_{m\in\mathcal{M}^t_n} \tilde{u}_{n,m}\ubar{\mu}(p^t_{n,m})-H(t)\right\} \nonumber\\
	= & \left\{\sum_{n\in \mathcal{S}^{*t}(\P^t)}\sum_{m\in\mathcal{M}^t_n} \tilde{u}_{n,m}d^{\text{worst}}(p^t_{n,m}) >\sum_{n\in \mathcal{S}^{*t}(\P^t)}\sum_{m\in\mathcal{M}^t_n} \tilde{u}_{n,m}\ubar{\mu}(p^t_{n,m}) \right. \nonumber \\& \qquad\qquad\qquad\qquad\qquad\qquad \left. -\tilde{u}^{\max}M^{\max}\sum_{n \in \mathcal{S}^{*t}(\P^t)}L_nD_n^{\frac{\alpha_n}{2}}h^{-\alpha_n}_{n,T}-H(t)\right\}
	\end{align}
	
	Therefore, using \eqref{E_3_1}, \eqref{E_3_2} and \eqref{E_3_3}, the probability of event $E_3$ is bounded by
	\begin{align}\label{E_3_prob_bound}
	~\text{Prob}&\{E_3\}\nonumber\\
	\leq~\text{Prob}&\left\{W(t), \sum_{n\in G}\sum_{m\in\mathcal{M}^t_n} \tilde{u}_{n,m}d^{\text{best}}(p^t_{n,m}) \geq \sum_{n\in \mathcal{S}^{*t}(\X^t)}\sum_{m\in\mathcal{M}^t_n} \tilde{u}_{n,m}d^{\text{worst}}(p^t_{n,m}), \right.\nonumber\\
	&~\sum_{n\in G}\sum_{m\in\mathcal{M}^t_n} \tilde{u}_{n,m}d^{\text{best}}(p^t_{n,m}) < \sum_{n\in G}\sum_{m\in\mathcal{M}^t_n} \tilde{u}_{n,m}\bar{\mu}(p^t_{n,m}) \nonumber\\ & \qquad\qquad\qquad\qquad\qquad\qquad + \tilde{u}^{\max}M^{\max}\sum_{n \in G}L_nD_n^{\frac{\alpha_n}{2}}h^{-\alpha_n}_{n,T}+H(t), \nonumber\\
	&~\sum_{n\in \mathcal{S}^{*t}(\P^t)}\sum_{m\in\mathcal{M}^t_n} \tilde{u}_{n,m}d^{\text{worst}}(p^t_{n,m}) >\sum_{n\in \mathcal{S}^{*t}(\P^t)}\sum_{m\in\mathcal{M}^t_n} \tilde{u}_{n,m}\ubar{\mu}(p^t_{n,m}) \nonumber \\& \qquad\qquad\qquad\qquad\qquad\qquad \left. -\tilde{u}^{\max}M^{\max}\sum_{n \in \mathcal{S}^{*t}(\P^t)}L_nD_n^{\frac{\alpha_n}{2}}h^{-\alpha_n}_{n,T}-H(t)\right\}.
	\end{align}
	
	We want to find a condition under which the probability for $E_3$ is zero. For this purpose, it is sufficient to show that the probability for the right-hand side in \eqref{E_3_prob_bound} is zero. Suppose that the following condition is satisfied:
	\begin{align}\label{eq:condition_E3}
	2H(t)+\tilde{u}^{\max}M^{\max}\left(\sum_{n \in G}L_nD_n^{\frac{\alpha_n}{2}}h^{-\alpha_n}_{n,T}+\sum_{n \in \mathcal{S}^{*t}(\P^t)}L_nD_n^{\frac{\alpha_n}{2}}h^{-\alpha_n}_{n,T}\right) \leq At^\theta
	\end{align}
	
	Since $G\in\mathcal{L}^t(\P^t)$, we have $\sum_{n\in S^{*t}(\P^t)} \sum_{m \in \mathcal{M}^t_n} \tilde{u}_{n,m} \ubar{\mu}(p^t_{n,m}) - \sum_{n\in G} \sum_{m \in \mathcal{M}^t_n} \tilde{u}_{n,m}\bar{\mu}_n(p^t_{n,m}) \geq A t^\theta$, which together with \eqref{eq:condition_E3} implies that
	\begin{align*}
	&\sum_{n\in S^{*t}(\P^t)} \sum_{m \in \mathcal{M}^t_n} \tilde{u}_{n,m} \ubar{\mu}(p^t_{n,m}) - \sum_{n\in G} \sum_{m \in \mathcal{M}^t_n} \tilde{u}_{n,m} \bar{\mu}(p^t_{n,m})\\&~~-\left(2H(t)+\tilde{u}^{\max}M^{\max}\left(\sum_{n \in G}L_nD_n^{\frac{\alpha_n}{2}}h^{-\alpha_n}_{n,T}+\sum_{n \in \mathcal{S}^{*t}(\P^t)}L_n D_n^{\frac{\alpha_n}{2}}h^{-\alpha_n}_{n,T}\right)\right)\geq 0
	\end{align*}
	Rewriting yields
	\begin{align}\label{condition_rewr}
	& \sum_{n\in S^{*t}(\P^t)} \sum_{m \in \mathcal{M}^t_n} \tilde{u}_{n,m} \ubar{\mu}(p^t_{n,m}) - \tilde{u}^{\max}M^{\max}\sum_{n \in \mathcal{S}^{*t}(\P^t)}L_n D_n^{\frac{\alpha_n}{2}}h^{-\alpha_n}_{n,T} - H(t)\nonumber\\
	\geq & \sum_{n\in G} \sum_{m \in \mathcal{M}^t_n} \tilde{u}_{n,m} \bar{\mu}(p^t_{n,m})+\tilde{u}^{\max}M^{\max}\sum_{n \in G}L_nD_n^{\frac{\alpha_n}{2}}h^{-\alpha_n}_{n,T}+H(t) \geq 0
	\end{align}
	If \eqref{condition_rewr} holds true, the three components of the right-hand side in \eqref{E_3_prob_bound} cannot be satisfied at the same time: Combining the second and third component of \eqref{E_3_prob_bound} with \eqref{condition_rewr} yields $\sum_{n\in G} \sum_{m\in\mathcal{M}^t_n} \allowbreak \tilde{u}_{n,m}d^{\text{best}}(p^t_{n,m}) < \sum_{n\in \mathcal{S}^{*t}(\P^t)}\sum_{m\in\mathcal{M}^t_n}\tilde{u}_{n,m}d^{\text{worst}}(p^t_{n,m})$, which contradicts the first term of \eqref{E_3_prob_bound}. Therefore, under condition \eqref{eq:condition_E3}, it follows that $\text{Prob}\{E_3\}=0$.
	
	So far, the analysis was performed with respected to an arbitrary $H(t)>0$. In the remainder of the proof, we choose $H(t)=b\tilde{u}^{\max}M^{\max}d^{\max}t^{-z^{\min/2}}$, where $z^{\min}=\min_{n\in\mathcal{N}} z_n$. Then, using \eqref{eq:prob_E_1} and \eqref{eq:prob_E_2}
	\begin{align}
	\text{Prob}\{E_1\} \leq&  \sum_{n\in G}\sum_{m\in\mathcal{M}^t_n} \exp\left(\dfrac{-2H(t)^2t^{z_n}\log(t)}{b^2(M^{\max})^2(\tilde{u}^{\max})^2(d^{\max})^2}\right) \nonumber \\
	= &  \sum_{n\in G}\sum_{m\in\mathcal{M}^t_n} \exp\left(\dfrac{-2\left(b\tilde{u}^{\max}M^{\max}d^{\max}t^{-z^{\min/2}}\right)^2t^{z_n}\log(t)}{b^2(M^{\max})^2(\tilde{u}^{\max})^2(d^{\max})^2}\right) \nonumber \\
	= & \sum_{n\in G}\sum_{m\in\mathcal{M}^t_n} \exp\left(-2t^{(z_n-z^{\min})}\log(t)\right) \nonumber \\
	\leq & \sum_{n\in G}\sum_{m\in\mathcal{M}^t_n} \exp\left(-2\log(t)\right) \nonumber \\
	\leq & bM^{\max}t^{-2}
	\end{align}
	and analogously
	\begin{align}
	\text{Prob}\{E_2\} \leq bM^{\max}t^{-2}
	\end{align}
	
	To sum up, under condition \eqref{eq:condition_E3}, using \eqref{ori_E123}, the probability in \eqref{prob_subopt} is bounded by
	\begin{align}
	& \text{Prob}\left\{V_G(t),W(t)\right\} \nonumber \\
	\leq & \text{Prob}\left\{E_1\cup E_2\cup E_3\right\} \nonumber\\
	\leq & \text{Prob}\left\{E_1\right\} + \text{Prob}\left\{E_2\right\} + \text{Prob}\left\{E_3\right\} \nonumber\\
	\leq & 2bM^{\max}t^{-2}
	\end{align}
	
	\begin{align}
	\mathbb{E}[R_s(T)] \leq & b\tilde{u}^{\max}M^{\max}d^{\max}\times\sum_{t=1}^{T}\sum_{G\in\mathcal{L}^t(P^t)} \text{Prob}\left\{V_G^t, W(t)\right\} \nonumber \\
	\leq & b\tilde{u}^{\max}M^{\max}d^{\max}{N \choose b}\sum_{t=1}^{T}2bM^{\max}t^{-2} \nonumber \\
	\leq & b^2\tilde{u}^{\max} (M^{\max})^2 d^{\max}{N \choose b} \cdot 2\sum_{t=1}^{\infty}t^{-2} \nonumber \\
	\leq & b^2 \tilde{u}^{\max} (M^{\max})^2 d^{\max} {N \choose b} \frac{\pi^2}{3}
	\end{align}
	where ${N \choose b}$ is number of subsets of size $b$ in $\mathcal{N}$ and the value of Dirichlet series is inserted in the last step.
\end{proof}

Now we give a bound for $\mathbb{E}\left[R_n(T)\right]$.
\begin{lemma}[Bound for  $\mathbb{E}(R_n(T))$] \label{lemma:bound_R_n}
	Let $K_n(t)=t^{z_n}\log(t)$ and $h_{n,T}=\lceil T^{\gamma_n} \rceil$, where $0<z_n<1$ and $0<\gamma_n<\frac{1}{D_n}$. If SEEN is run with these parameters, Assumption \ref{holder} holds true, the regret $\mathbb{E}[R_n(T)]$ is bounded by
	\begin{align}
	\mathbb{E}[R_n(T)]\leq 3b\tilde{u}^{\max}M^{\max}L_{\tilde{n}}D_{\tilde{n}}^{\frac{\alpha_{\tilde{n}}}{2}}T^{1-\gamma_{\tilde{n}}\alpha_{\tilde{n}}}+\dfrac{A}{1+\theta}T^{1+\theta}
	\end{align}
	where $\tilde{n}=\argmax_n L_{n}D_{n}^{\frac{\alpha_{n}}{2}}h^{-\alpha_{n}}_{n,T}$.
\end{lemma}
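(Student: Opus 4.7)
The plan is to establish the bound on $\mathbb{E}[R_n(T)]$ by controlling the instantaneous regret in every exploitation slot in which a near-optimal subset $G \in \mathcal{N}_b \setminus \mathcal{L}^t(\P^t)$ is chosen, and then summing the resulting pointwise bound from $t=1$ to $T$. Fix such a slot $t$ and write the per-slot regret as $r^t(G) = \sum_{n\in \mathcal{S}^{*t}} \sum_{m\in\mathcal{M}^t_n} \tilde{u}_{n,m}\mu(x^t_{n,m}) - \sum_{n\in G} \sum_{m\in\mathcal{M}^t_n} \tilde{u}_{n,m}\mu(x^t_{n,m})$. The first step is to sandwich $\mu$ by hypercube extrema, applying $\mu(x^t_{n,m}) \leq \bar{\mu}(p^t_{n,m})$ to the oracle sum and $\mu(x^t_{n,m}) \geq \ubar{\mu}(p^t_{n,m})$ to the $G$ sum.

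Next I would use Assumption \ref{holder} to quantify how coarse the partition is. Since any hypercube $p \in \mathcal{P}_{n,T}$ has diameter $\sqrt{D_n}/h_{n,T}$, the Hölder condition gives $\bar{\mu}(p) - \ubar{\mu}(p) \leq L_n D_n^{\alpha_n/2} h_{n,T}^{-\alpha_n} =: \delta_n$, and similarly $|\mu(x^*(p)) - \bar{\mu}(p)| \leq \delta_n$. The key bridging step is to connect $\mathcal{S}^{*t}$ (defined via the true contexts $\mu(x^t_{n,m})$) to $\mathcal{S}^{*t}(\P^t)$ (defined via the midpoints $\mu(x^*(p^t_{n,m}))$), because the near-optimality hypothesis is phrased in terms of $\mathcal{S}^{*t}(\P^t)$ but the regret is measured against $\mathcal{S}^{*t}$. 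Using the optimality of $\mathcal{S}^{*t}(\P^t)$ at midpoints together with the two $\delta_n$-approximations above, I can chain inequalities to obtain $\sum_{n\in \mathcal{S}^{*t}} \sum_m \tilde{u}_{n,m}\bar{\mu}(p^t_{n,m}) \leq \sum_{n\in \mathcal{S}^{*t}(\P^t)} \sum_m \tilde{u}_{n,m}\ubar{\mu}(p^t_{n,m}) + 2b\tilde{u}^{\max}M^{\max}\delta_{\tilde{n}}$, where $\tilde{n}=\argmax_n \delta_n$. Substituting this, together with a third $\delta$-term from replacing $\ubar{\mu}$ by $\bar{\mu}$ on $G$, yields $r^t(G) \leq \bigl(\sum_{n\in \mathcal{S}^{*t}(\P^t)} \sum_m \tilde{u}_{n,m}\ubar{\mu}(p^t_{n,m}) - \sum_{n\in G} \sum_m \tilde{u}_{n,m}\bar{\mu}(p^t_{n,m})\bigr) + 3b\tilde{u}^{\max}M^{\max}\delta_{\tilde{n}}$. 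The parenthesized quantity is strictly less than $At^\theta$ by the very definition of the near-optimal set $\mathcal{N}_b \setminus \mathcal{L}^t(\P^t)$.

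Finally, I would sum over $1 \leq t \leq T$. The constant $\delta$-term contributes $3b\tilde{u}^{\max}M^{\max}L_{\tilde{n}}D_{\tilde{n}}^{\alpha_{\tilde{n}}/2}\, T\, h_{\tilde{n},T}^{-\alpha_{\tilde{n}}}$, which with $h_{\tilde{n},T}=\lceil T^{\gamma_{\tilde{n}}}\rceil$ equals $3b\tilde{u}^{\max}M^{\max}L_{\tilde{n}}D_{\tilde{n}}^{\alpha_{\tilde{n}}/2}T^{1-\gamma_{\tilde{n}}\alpha_{\tilde{n}}}$, while $\sum_{t=1}^{T}At^\theta \leq \frac{A}{1+\theta}T^{1+\theta}$ by a standard integral comparison. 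Adding the two gives the stated bound, and the resulting expression automatically holds deterministically (not just in expectation), so taking expectations is trivial.

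The main obstacle is the bridging inequality in the second paragraph: the regret reference is $\mathcal{S}^{*t}$, which is defined pointwise at the true user contexts, whereas the algorithm's learning estimates and the near-optimality criterion live on the hypercube level via $\mathcal{S}^{*t}(\P^t)$. Threading Hölder continuity through the swap $\mathcal{S}^{*t} \leftrightarrow \mathcal{S}^{*t}(\P^t)$ without losing the $O(t^\theta)$ slack is what produces the factor $3$ in the final bound and pins down the exponent $1-\gamma_{\tilde{n}}\alpha_{\tilde{n}}$; everything else is bookkeeping and a routine sum.
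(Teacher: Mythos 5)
Your proposal is correct and follows essentially the same route as the paper's proof of Lemma \ref{lemma:bound_R_n}: bound the per-slot expected loss by threading the H\"{o}lder condition through the swap between $\mathcal{S}^{*t}$ and $\mathcal{S}^{*t}(\P^t)$ via the hypercube midpoints (accumulating three terms of size $b\tilde{u}^{\max}M^{\max}L_{\tilde{n}}D_{\tilde{n}}^{\alpha_{\tilde{n}}/2}h_{\tilde{n},T}^{-\alpha_{\tilde{n}}}$), invoke the definition of the near-optimal set to absorb the remaining gap into $At^{\theta}$, and sum over $t$ using $h_{n,T}^{-\alpha_n}\leq T^{-\gamma_n\alpha_n}$ and the integral comparison for $\sum_t t^{\theta}$. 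The only cosmetic difference is that you sandwich $\mu$ by the hypercube extrema before passing to the midpoints, whereas the paper goes to the midpoints first; the bookkeeping and the resulting constants are identical.
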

\begin{proof}[Proof of Lemma \ref{lemma:bound_R_n}]
	For $1 \leq t \leq T$, consider the event $W(t)$ as in the previous proof. Recall that the subset of SBSs rented by SEEN in time slot $t$ is denoted by $\mathcal{S}^t$. The loss due to near-optimal subsets can be written as
	\begin{align}
	& R_n(T)=\sum_{t=1}^{T}I_{\{W(t),\mathcal{S}^t \in \mathcal{N}_b\backslash\mathcal{L}^t(\P^t)\}}\times\left( \sum_{n\in\mathcal{S}^{*t}(\X^t)}\sum_{m \in \mathcal{M}^t_n} \tilde{u}_{n,m}d(x^t_{n,m})-\sum_{n\in\mathcal{S}^{t}}\sum_{m \in \mathcal{M}^t_n} \tilde{u}_{n,m}d(x^t_{n,m})\right)
	\end{align}
	where in each time slot in which the selected subset $\mathcal{S}^t$ is near-optimal, i.e.,  $\mathcal{S}^t\in\mathcal{N}_b \backslash\mathcal{L}^t(\P^t)$, the loss is considered for renting $\mathcal{S}^t$ instead of $\mathcal{S}^{*t}(\X^t)$. Let $Q(t) = W(t)\cap\{\mathcal{S}^t\in\mathcal{N}_b \backslash \mathcal{L}^t(\P^t)\}$ denote the event of renting a near-optimal set of SBSs. Then, it follows for the regret by taking the expectation:
	\begin{align*}
	\mathbb{E}\left[R_n(T)\right]=\sum_{t=1}^{T}\mathbb{E}\left[I_{\{Q(t)\}}\times \left( \sum_{n\in\mathcal{S}^{*t}(\X^t)}\sum_{m \in \mathcal{M}^t_n} \tilde{u}_{n,m}d(x^t_{n,m}) -\sum_{n\in\mathcal{S}^{t}}\sum_{m \in \mathcal{M}^t_n} \tilde{u}_{n,m}d(x^t_{n,m})\right)\right]
	\end{align*}
	By the definition of conditional expectation, this is equivalent to
	\begin{align*}
	&\mathbb{E}\left[R_n(T)\right]\\
	=&\sum_{t=1}^{T}\text{Prob}\{Q(t)\}\cdot\mathbb{E}\left[ \sum_{n\in\mathcal{S}^{*t}(\X^t)}\sum_{m \in \mathcal{M}^t_n} \tilde{u}_{n,m}d(x^t_{n,m}) -\sum_{n\in\mathcal{S}^{t}}\sum_{m \in \mathcal{M}^t_n} \tilde{u}_{n,m}d(x^t_{n,m}) \mid Q(t)\right]\\
	=&\sum_{t=1}^{T}\mathbb{E}\left[ \sum_{n\in\mathcal{S}^{*t}(\X^t)}\sum_{m \in \mathcal{M}^t_n} \tilde{u}_{n,m}d(x^t_{n,m}) -\sum_{n\in\mathcal{S}^{t}}\sum_{m \in \mathcal{M}^t_n} \tilde{u}_{n,m}d(x^t_{n,m}) \mid Q(t)\right]
	\end{align*}
	
	Now, let $t$ be the time slot, where $Q(t)$ holds true, i.e., the algorithm enters an exploitation phase and $J\in\mathcal{N}_b\backslash\mathcal{L}^t(\P^t)$. By the definition of $\mathcal{N}^{\text{ue},t}$, in this case it holds that $C^t_n(p^t_{n,m})>K_n(t)=t^{z_n}\log(t)$ for all $n\in\mathcal{N}$ and all $m\in\mathcal{M}^t_n$. In addition, since $J\in\mathcal{N}_b \backslash\mathcal{L}^t(P^t)$, it holds
	\begin{align}
	\sum_{n\in S^{*t}(\P^t)} \sum_{m \in \mathcal{M}^t_n} \tilde{u}_{n,m} \ubar{\mu}(p^t_{n,m}) - \sum_{n\in J} \sum_{m \in \mathcal{M}^t_n} \tilde{u}_{n,m} \bar{\mu}(p^t_{n,m}) < At^\theta
	\end{align}
	To bound the regret, we have to give an upper bound on
	\begin{align}
	&\sum_{t=1}^{T}\mathbb{E}\left[ \sum_{n\in\mathcal{S}^{*t}(\X^t)}\sum_{m \in \mathcal{M}^t_n} \tilde{u}_{n,m}d(x^t_{n,m}) -\sum_{n\in J}\sum_{m \in \mathcal{M}^t_n} \tilde{u}_{n,m}d(x^t_{n,m}) \mid Q(t)\right]\\
	= & \sum_{t=1}^{T}\left( \sum_{n\in\mathcal{S}^{*t}(\X^t)}\sum_{m \in \mathcal{M}^t_n} \tilde{u}_{n,m}\mu(x^t_{n,m}) -\sum_{n\in J}\sum_{m \in \mathcal{M}^t_n} \tilde{u}_{n,m}\mu(x^t_{n,m})\right)
	\end{align}
	
	Applying H\"{o}lder condition several times yields
	\begin{align}
	&\sum_{t=1}^{T}\left( \sum_{n\in\mathcal{S}^{*t}(\X^t)}\sum_{m \in \mathcal{M}^t_n} \tilde{u}_{n,m}\mu(x^t_{n,m}) -\sum_{n\in J}\sum_{m \in \mathcal{M}^t_n} \tilde{u}_{n,m}\mu(x^t_{n,m})\right)\\
	\leq & \sum_{t=1}^{T}\left( \sum_{n\in\mathcal{S}^{*t}(\X^t)}\sum_{m \in \mathcal{M}^t_n} \tilde{u}_{n,m}\mu(x^*(p^t_{n,m})) \right.\nonumber\\ & \qquad \left. +b\tilde{u}^{\max}M^{\max}L_{\tilde{n}}D_{\tilde{n}}^{\frac{\alpha_{\tilde{n}}}{2}}h^{-\alpha_{\tilde{n}}}_{\tilde{n},T} -\sum_{n\in J}\sum_{m \in \mathcal{M}^t_n} \tilde{u}_{n,m}\mu(x^t_{n,m})\right)\\
	\leq & \sum_{t=1}^{T}\left( \sum_{n\in\mathcal{S}^{*t}(\P^t)}\sum_{m \in \mathcal{M}^t_n} \tilde{u}_{n,m}\mu_n(x^*(p^t_{n,m})) \right.\nonumber\\ & \qquad \left. + b\tilde{u}^{\max}M^{\max}L_{\tilde{n}}D_{\tilde{n}}^{\frac{\alpha_{\tilde{n}}}{2}}h^{-\alpha_{\tilde{n}}}_{\tilde{n},T}-\sum_{n\in J}\sum_{m \in \mathcal{M}^t_n} \tilde{u}_{n,m}\mu(x^t_{n,m})\right)\\
	\leq & \sum_{t=1}^{T}\left( \sum_{n\in\mathcal{S}^{*t}(\P^t)}\sum_{m \in \mathcal{M}^t_n} \tilde{u}_{n,m}\inf_{x \in p^t_{n,m}}\mu(x) \right.\nonumber\\ & \qquad \left. + 2b\tilde{u}^{\max}M^{\max}L_{\tilde{n}}D_{\tilde{n}}^{\frac{\alpha_{\tilde{n}}}{2}}h^{-\alpha_{\tilde{n}}}_{\tilde{n},T}-\sum_{n\in J}\sum_{m \in \mathcal{M}^t_n} \tilde{u}_{n,m}\mu(x^t_{n,m})\right)\\
	\leq & \sum_{t=1}^{T}\left( \sum_{n\in\mathcal{S}^{*t}(\P^t)}\sum_{m \in \mathcal{M}^t_n} \tilde{u}_{n,m}\inf_{x \in p^t_{n,m}}\mu(x) \right.\nonumber\\ & \qquad \left. + 3b\tilde{u}^{\max}M^{\max}L_{\tilde{n}}D_{\tilde{n}}^{\frac{\alpha_{\tilde{n}}}{2}}h^{-\alpha_{\tilde{n}}}_{\tilde{n},T}-\sum_{n\in J}\sum_{m \in \mathcal{M}^t_n} \tilde{u}_{n,m}\sup_{x \in p^t_{n,m}}\mu(x)\right)\\
	\leq & \sum_{t=1}^{T}\left( \sum_{n\in\mathcal{S}^{*t}(\P^t)}\sum_{m \in \mathcal{M}^t_n} \tilde{u}_{n,m}\ubar{\mu}(x)-\sum_{n\in J}\sum_{m \in \mathcal{M}^t_n} \tilde{u}_{n,m}\bar{\mu}(x)\right) \nonumber \\ & \qquad  + 3b\tilde{u}^{\max}M^{\max}L_{\tilde{n}}D_{\tilde{n}}^{\frac{\alpha_{\tilde{n}}}{2}}h^{-\alpha_{\tilde{n}}}_{\tilde{n},T}\\
	\leq & 3b\tilde{u}^{\max}M^{\max}L_{\tilde{n}}D_{\tilde{n}}^{\frac{\alpha_{\tilde{n}}}{2}}h^{-\alpha_{\tilde{n}}}_{\tilde{n},T}+At^{\theta}
	\end{align}
	where $\tilde{n}=\argmax_n L_{n}D_{n}^{\frac{\alpha_{n}}{2}}h^{-\alpha_{n}}_{n,T}$; the second inequality comes from the definition of $\mathcal{S}^{*t}(\P^t)$, i.e, for $\mathcal{S}^{*t}(\P^t)$, the sum of weighted expected demands at the centers of the hypercubes is at least as high as that of any other subset of SBSs. Using $h^{-\alpha^n}_{n,T}=\lceil T^{\gamma_n} \rceil^{-\alpha_n} \leq T^{-\gamma_n\alpha_n}$, we further have
	\begin{align}
	&\mathbb{E}\left[\sum_{n\in\mathcal{S}^{*t}(\X^t)}\sum_{m \in \mathcal{M}^t_n} \tilde{u}_{n,m}d(x^t_{n,m}) -\sum_{n\in J}\sum_{m \in \mathcal{M}^t_n} \tilde{u}_{n,m}d(x^t_{n,m}) \mid Q(t)\right]\nonumber\\
	\leq & 3b\tilde{u}^{\max}M^{\max}L_{\tilde{n}}D_{\tilde{n}}^{\frac{\alpha_{\tilde{n}}}{2}}T^{-\gamma_{\tilde{n}}\alpha_{\tilde{n}}}+At^{\theta}
	\end{align}
	
	Therefore, the regret can be bounded by
	\begin{align}
	\mathbb{E}[R_n(T)]\leq \sum_{t=1}^{T}\left(3b\tilde{u}^{\max}M^{\max}L_{\tilde{n}}D_{\tilde{n}}^{\frac{\alpha_{\tilde{n}}}{2}}T^{-\gamma_{\tilde{n}}\alpha_{\tilde{n}}}+At^{\theta}\right)
	\end{align}
	This implies $\mathbb{E}[R_n(T)]\leq 3b\tilde{u}^{\max}M^{\max}L_{\tilde{n}}D_{\tilde{n}}^{\frac{\alpha_{\tilde{n}}}{2}}T^{1-\gamma_{\tilde{n}}\alpha_{\tilde{n}}}+\dfrac{A}{1+\theta}T^{1+\theta}$.
\end{proof}
The over all regret is now bounded by applying the above Lemmas.
\begin{proof}[Proof of Theorem 1]
	First, let $K_n(t)=t^{z_n}\log(t)$ and $h_{n,T}=\lceil T^{\gamma_n} \rceil$, where $0<z_n<1$ and $0<\gamma_n<\frac{1}{D_n}$; let $H(t):=b\tilde{u}^{\max}M^{\max}d^{\max}t^{-z^{\min/2}}$; let the condition $2H(t)+\tilde{u}^{\max}M^{\max} \allowbreak \left(\sum_{n \in G}L_nD_n^{\frac{\alpha_n}{2}}h^{-\alpha_n}_T+\sum_{n \in \mathcal{S}^{*t}(P^t)}L_nD_n^{\frac{\alpha_n}{2}}h^{-\alpha_n}_T\right) \leq At^\theta$ be satisfied for all $1<t<T$. Combining the results of Lemma 1, 2, 3, the regret $R(T)$ is bounded by
	\begin{align}
	R(T) \leq 	& b\tilde{u}^{\max}M^{\max}d^{\max}\sum_{n\in\mathcal{N}} 2^{D_n}\left(\log(T)T^{z_n+\gamma_nD_n} + T^{\gamma_nD_n}\right)  \\
	& + b^2 \tilde{u}^{\max} (M^{\max})^2 d^{\max} {N \choose b} \frac{\pi^2}{3}\nonumber  + 3b\tilde{u}^{\max}M^{\max}L_{\tilde{n}}D_{\tilde{n}}^{\frac{\alpha_{\tilde{n}}}{2}}T^{1-\gamma_{\tilde{n}}\alpha_{\tilde{n}}}+\dfrac{A}{1+\theta}T^{1+\theta}
	\end{align}
	
	The summands contribute to the regret with leading orders $O(\log(T)T^{z_n+\gamma_nD_n})$, $O(T^{1-\gamma_{\tilde{n}}\alpha_{\tilde{n}}})$ and $O(T^{1+\theta})$. In order to balance the leading orders, we select the parameters $z_n, \gamma_n, A, \theta$ as following values $z_n=\frac{2\alpha_n}{3\alpha_n+D_n}\in (0,1), \gamma_n=\frac{z_n}{2\alpha_n}\in(0,\frac{1}{D_n}), \theta=-\frac{z^{\min}}{2}$, and $A=2b\tilde{u}^{\max}M^{\max}d^{\max}+2b\tilde{u}^{\max}M^{\max}L_{\tilde{n}}D_{\tilde{n}}^{\alpha_{\tilde{n}}/2}$. Note that the condition \eqref{eq:condition_E3} is satisfied with these values. The the regret $R(T)$ reduces to
	\begin{align}
	& R(T) \leq b\tilde{u}^{\max}M^{\max}d^{\max}\sum_{n\in\mathcal{N}} 2^{D_n}\left(\log(T)T^{\frac{2\alpha_n+D_n}{3\alpha_n+D_n}} + T^{\frac{D_n}{3\alpha_n+D_n}}\right)  \\
	& + b^2 \tilde{u}^{\max} (M^{\max})^2 d^{\max} {N \choose b} \frac{\pi^2}{3}\nonumber  + 3b\tilde{u}^{\max}M^{\max}L_{\tilde{n}}D_{\tilde{n}}^{\frac{\alpha_{\tilde{n}}}{2}}T^{\frac{2\alpha_{\tilde{n}}+D_{\tilde{n}}}{3\alpha_{\tilde{n}}+D_{\tilde{n}}}}+\dfrac{A}{1+\theta}T^{\left.\frac{2\alpha_n+D_n}{3\alpha_n+D_n}\right|_{n=\argmin_{n}z_n}}
	\end{align}
	Let $\bar{n}=\argmax_n \frac{2\alpha_n+D_n}{3\alpha_n+D_n}$, then the leading order is $O(bN\tilde{u}^{\max}M^{\max}d^{\max}2^{D_{\bar{n}}}  T^{\frac{2\alpha_{\bar{n}}+D_{\bar{n}}}{3\alpha_{\bar{n}}+D_{\bar{n}}}} \allowbreak \log(T))$.
\end{proof}

\section{Proof of Theorem \ref{theo:regret_bound_seen_O}}\label{proof:theorem_bound_seen_o}
\begin{proof}
	The main steps to prove the regret bound for SEEN-O is similar to that in SEEN. The regret $R(T)$ is divided in to three terms: the regret for exploitation phases $\mathbb{E}[R_e(T)]$, the regret for suboptimal choice $\mathbb{E}[R_s(T)]$, and the regret for near-optimal choices $\mathbb{E}[R_s(T)]$.
	
	For the exploration phase in overlapped SBSs, SEEN-O randomly select SBSs from under-explored SBSs. However, the under-explored SBSs are determined based on their observed users without considering the service availability at nearby SBSs, therefore it is possible that a chosen under-explored SBS will receive no service requests from observed users and no service demand can observed for counter and demand estimation update. This problem is solved by the updating scheme of SEEN-O where the observed service demand from a user can be use to update the demand estimation at all SBSs covering that user. Therefore, the number of times that an SBS will be explored is still bounded in the worst case. Therefore, the bound of $\mathbb{E}[R_e(T)]$ in Lemma \ref{lemma:bound_R_e} holds for SEEN-O. 
	
	The proof for the bound of $\mathbb{E}[R_e(T)]$ and $\mathbb{E}[R_n(T)]$ is the same as in Lemma \ref{lemma:bound_R_s} and Lemma \ref{lemma:bound_R_n}. Therefore, the regret bound of SEEN-O is the same as that of SEEN.
\end{proof}
\end{document}